\theoremstyle{remark}
\theoremstyle{plain}
\newtheorem{theorem}{Theorem}
\begin{document}
\title{An Efficient Max-Min Fair Resource Optimization Algorithm for Rate-Splitting Multiple Access}
\author{Facheng Luo, Yijie Mao,~\IEEEmembership{Member,~IEEE}

\thanks{
\par
This work has been supported in part by the National Nature Science Foundation of China under Grant 62201347; and in part by Shanghai Sailing Program under Grant 22YF1428400. 
\par 
F. Luo and Y. Mao are with the School of Information Science and Technology, ShanghaiTech University, Shanghai 201210, China (email: \{luofch2022, maoyj\}@shanghaitech.edu.cn).}
}

\maketitle

\begin{abstract}
    The max-min fairness (MMF) problem in rate-splitting multiple access (RSMA) is known to be challenging due to its non-convex and non-smooth nature,
    as well as the coupled beamforming and common rate variables.
    Conventional algorithms to address this problem often incur high computational complexity or degraded MMF rate performance.
    To address these challenges, in this work,
    we propose a novel optimization algorithm named extragradient-fractional programming (EG-FP) to address the MMF problem of downlink RSMA.
    The proposed algorithm first leverages FP to transform the original problem into a block-wise convex problem.
    For the subproblem of precoding block,
    we show that its Lagrangian dual is equivalent to a variational inequality problem,
    which is then solved using an extragradient-based algorithm.
    Additionally, we discover the optimal beamforming structure of the problem and based on which,
    we introduce a low-dimensional EG-FP algorithm with computational complexity independent of the number of transmit antennas. This feature is especially beneficial in scenarios with a large number of transmit antennas.
    The proposed algorithms are then extended to handle imperfect channel state information at the transmitter (CSIT).
    Numerical results demonstrate that the  MMF rate achieved by our proposed algorithms closely matches that of the conventional successive convex approximation (SCA) algorithm and significantly outperforms other baseline schemes.
    Remarkably, the average CPU time of the proposed algorithms is less than 10\% of the runtime required by the SCA algorithm, showing the efficiency and scalability of the proposed algorithms.
\end{abstract}

\begin{IEEEkeywords}
Rate-splitting multiple access (RSMA), max-min fairness, resource optimization, low-complexity algorithm.
\end{IEEEkeywords}
    
\IEEEpeerreviewmaketitle

\section{Introduction}
\label{introduction}
\IEEEPARstart{T}{he} next generation mobile communication system has attracted immense interest from both academic and industrial spheres.
As the motivation for the Internet of everything,
the next generation mobile communication system is expected to revolutionize services by offering higher throughput,
ultra-reliable connectivity, and diverse quality of service (QoS) while supporting massive connectivity \cite{mao2022rate}.
With the expansive array of devices, deployments, and applications in 6G,
the challenge lies in catering to users with varying capabilities and diverse qualities of channel state information at the transmitter (CSIT) simultaneously.
Consequently, the effective utilization of wireless resources and the management of interference are imperative for 6G, prompting a comprehensive revolution of physical layer resource allocation algorithms and multiple access (MA) techniques for wireless communication systems.
\par
Rate-splitting multiple access (RSMA), an innovative interference management and MA strategy, is recognized as a powerful technique for future wireless communication systems by employing linearly precoded rate-splitting (RS) \cite{commag2016clerckx, mao2022rate, clerckx2023RSMAtutorial}.
This method contrasts with existing MAs such as space-division multiple access (SDMA) which treats interference as noise entirely, and non-orthogonal multiple access (NOMA) which fully decodes interference.
RSMA bridges between these extremes by offering an adaptive interference management strategy.
It intelligently splits users' messages into distinct common and private parts.
Common parts are collaboratively encoded and decoded among multiple users,
while private parts are individually decoded by their respective users.
Built upon this design principle, 
RSMA effectively enables partial decoding of interference while treating the remaining interference as noise,
thereby surpassing the performance achieved by SDMA and NOMA \cite{mao2018rate}.

\par 
The joint beamforming vectors (including power allocation) and common rate optimization for RSMA have been extensively studied in both perfect and imperfect CSIT across various design objectives.
These objectives include, but are not limited to, maximizing the weighted sum-rate\cite{xu2021},
achieving the max-min rate\cite{luo2023practical},
enhancing energy efficiency\cite{zhangruichen2023},
and minimizing overall power consumption\cite{camana2021}.
In addition to these downlink RSMA studies, numerous uplink RSMA studies also exist.
For example, \cite{yang2020sum} transforms the non-convex sum-rate maximization problem in uplink RSMA into a solvable rate-based problem with a closed-form solution, derives the optimal power and decoding order using exhaustive search, and proposes a low-complexity user pairing scheme to enhance efficiency.
\cite{tegos2022performance} derives closed-form expressions for the outage probability of user messages. These expressions are then leveraged to calculate the throughput  in a random access network based on  RSMA.
\cite{chrysologou2024coexistence} derives novel closed-form expressions for outage probability and ergodic rate, as well as system-wide outage throughput and ergodic sum rate.
One of the most common design objectives in a multiuser scenario is to maximize the system sum-rate or weighted sum-rate.
However, as highlighted in \cite{Joudeh2016sumrate, fang2024rate}, in the context of RSMA sum-rate optimization,
it is possible for certain users to experience enforced zero rates in order to achieve the maximum sum-rate.
To ensure fairness among users, max-min fairness (MMF) has become another popular resource optimization problem in RSMA.
This strategy aims to balance user experiences by maximizing the worst-case rate among users, thereby ensuring a fair and equitable distribution of rates among users.
\par 
Due to the inherent non-convex and non-smooth nature of MMF resource allocation problem,
a number of algorithms have been proposed to tackle these challenges.
Notable among these are the weighted minimum mean-square error (WMMSE) algorithm \cite{yalcin2021, joudeh2017rate},
successive convex approximation (SCA) algorithm \cite{xuyunnuo2023max, mao2020max},
and the generalized power iteration (GPI) algorithm \cite{park2022rate, korean}.
The WMMSE algorithm addresses non-convexity introduced by rate expressions through the rate-WMMSE relationship and tackles non-smoothness introduced by the max-min rate using auxiliary variables.
It successively solves the MMF problem of RSMA by iteratively using a solver in optimization toolboxes to solve a convex subproblem. 
Specifically, considering a RSMA-assisted cooperative multigroup multicasting network,
\cite{yalcin2021} developed an alternating optimization algorithm based on WMMSE to optimize precoders that maximize the minimum achievable rate across multicast groups under high user demand and interference conditions.
\cite{joudeh2017rate} further solves the MMF problem of RSMA in overloaded multigroup multicast beamforming systems based on WMMSE.
Similarly, the SCA algorithm addresses the MMF problem of RSMA by constructing a series of convex surrogate functions, which are solved iteratively using optimization toolboxes. 
Existing works, such as \cite{xuyunnuo2023max},
proposes a 1D-SCA algorithm to optimize time allocation and precoders for maximizing the minimum user rate in downlink multi-antenna RSMA systems with user relaying,
under finite blocklength constraints.
\cite{mao2020max} extends this by introducing a two-stage approach:
first selecting relaying users based on channel gains,
then jointly optimizing precoders, common rate,
and time slots using SCA to ensure fairness in a $K$-user MISO broadcast channel under a power constraint.
Both WMMSE and SCA require a relatively high computational complexity, owing to their reliance on optimization toolboxes to solve a sequence of problems.

Recently, a novel GPI algorithm is proposed as an optimization toolbox-free algorithm to address the MMF problem of RSMA.
GPI operates via a two-step iterative approach. It initially adopts the LogSumExp (LSE) technique to optimize the beamformers for a given common rate allocation.
Subsequently, it determines the common rate allocation under given beamformers by employing a water-filling-like method.
The two steps are iterated until convergence. 
However, unlike WMMSE and SCA, the convergence of GPI to local optima remains uncertain due to its use of fixed common rate allocation from the initial step. This limits the flexibility for subsequent updates of beamformers and common rates.
Another recent work \cite{dizdar2024maxmin} introduces a low-complexity algorithm for overloaded networks.
The algorithm fixes the beamformers by zero-forcing (ZF) and maximum ratio transmission (MRT),
and then optimizes the power allocation among streams.
Although the algorithm reduces computational complexity,
it shrinks the feasible solution space and does not guarantee convergence to a locally optimal solution of the original RSMA MMF problem,
where beamforming vectors and rate allocations are jointly optimized.

\par 
In addition to the aforementioned works, our recent research  \cite{luo2023practical} introduces a groundbreaking closed-form solution addressing the MMF problem of RSMA.
By fixing the directions of beamformers based on zero-forcing, and using water-filling for power allocation among the private streams,  the proposed algorithm identifies the optimal power allocation between the common and private streams. This is achieved by evaluating six potential candidates and selecting the one that maximizes the MMF rate without the need for an iterative process.
This approach has been demonstrated to achieve 90\% of the MMF rate of the SCA algorithm, requiring only 0.1 milliseconds of computational time.
However, it is restricted to the two-user case with perfect CSIT.

\par
Indeed,  the non-convex and non-smooth nature of the RSMA MMF problem prevents the derivation of closed-form  optimal beamforming solutions. 
Existing approaches  mostly rely on interior-point method, typically implemented through optimization toolboxes like CVX,
 to achieve locally optimal solutions.
However, these methods incur substantial computational complexity,
as highlighted in studies such as \cite{xuyunnuo2023max, mao2020max, yalcin2021, joudeh2017rate}.
Meanwhile, alternative methods that avoid the interior-point method often result in degraded MMF rate performance,
as seen in \cite{park2022rate, korean, dizdar2024maxmin}.
Therefore, our work aims to bridge this research gap by introducing an efficient algorithm that converges to a locally optimal solution for the MMF problem of RSMA while significantly reducing the computational complexity for downlink RSMA.
This is achieved by leveraging the FP approach and the extragradient algorithm \cite{marcotte1991application, zhang2023ultra}.
The primary contributions of this paper are summarized as follows:
\begin{itemize}
    \item
    We propose a novel optimization algorithm named the extragradient-based FP (EG-FP) algorithm to solve the MMF problem of RSMA.
    The original problem is first transformed into a sequence of convex subproblems based on FP.
    By exploring the Lagrangian dual problem of each convex subproblem, we equivalently transform the dual problem into a variational inequality problem. The main novelty of the proposed EG-FP algorithm lies in effectively utilizing the variational inequality property to address each subproblem via the extragradient algorithm.  

    \item 
    We discover the optimal beamforming structure of the MMF problem by leveraging the Karush-Kuhn-Tucker (KKT) conditions.
    This exploration has revealed that:  the optimal beamforming vectors of RSMA are linear combinations of channel vectors. Building upon this valuable insight, we propose a low-dimensional variant of our proposed EG-FP algorithm. This algorithm achieves low computational complexity, which is independent of the number of transmit antennas. It therefore shows significant computational advantages in scenarios involving a massive number of transmit antennas.

    \item 
    In addition to addressing the MMF problem of RSMA for perfect CSIT,
    our contributions also extend to addressing the imperfect CSIT scenario.
    The formulations and methodologies developed in our approach apply to the imperfect CSIT scenario,
    thereby broadening the applicability and robustness of our proposed algorithm.

    \item 
    To illustrate the validity and efficiency of our novel algorithms, comprehensive numerical simulations have been carefully conducted in both small-scale and large-scale networks.
    These results not only show the effectiveness but also the computational efficiency of our proposed approach, underlining the practical viability of the proposed algorithms in real-world applications.
    
\end{itemize}
To the best of our knowledge,
this is the first work that discovers locally optimal solutions for the joint optimization of beamforming vectors and common rate allocation of RSMA without relying on any optimization toolbox.
While the algorithm proposed in \cite{fang2024rate} for the weighted sum-rate problem of RSMA appears to encompass common rate allocation,
the inherent nature suggests that common rate allocation can be eliminated.
Consequently, the algorithm cannot apply to the MMF problem of RSMA.
The algorithms proposed in this work not only address the MMF problem but also have the potential to handle other RSMA problems where common rate allocation cannot be omitted.

\par 
The paper is organized as follows. 
Section \ref{system model} details the system model and formulates the MMF problem of RSMA.
Section \ref{algorithm} presents the optimization framework designed to solve the formulated problem.
Section \ref{imperfect CSIT} generalizes the proposed algorithm to the imperfect CSIT scenario. 
Following this, Section \ref{numerical results} presents and analyzes the outcomes of the numerical simulations.
Finally, Section \ref{conclusion} concludes the paper.

\textit{Notation:} We denote matrices using upper-case bold letters,
while lower-case bold letters represent vectors.
The conjugate transpose of a vector or matrix is denoted by $(\cdot)^H$,
while the transpose is indicated by $(\cdot)^T$.
The $l_2$-norm for a vector is represented by $\|\cdot\|$, and $|\cdot|$ denotes the modulus of a complex number.
The real and imaginary parts of a complex number are respectively denoted by $\Re\{\cdot\}$ and $\Im\{\cdot\}$.
The conjugate of a complex number or vector is denoted by $(\cdot)^*$.
The mathematical expectation operator is represented by $\mathbb{E}\{\cdot\}$.
Furthermore, $\operatorname{tr}(\cdot)$ denotes the trace of a matrix.

\section{System Model and Problem Formulation}
\label{system model}
\begin{figure}[htbp]
    \centering
    \includegraphics[width=\linewidth]{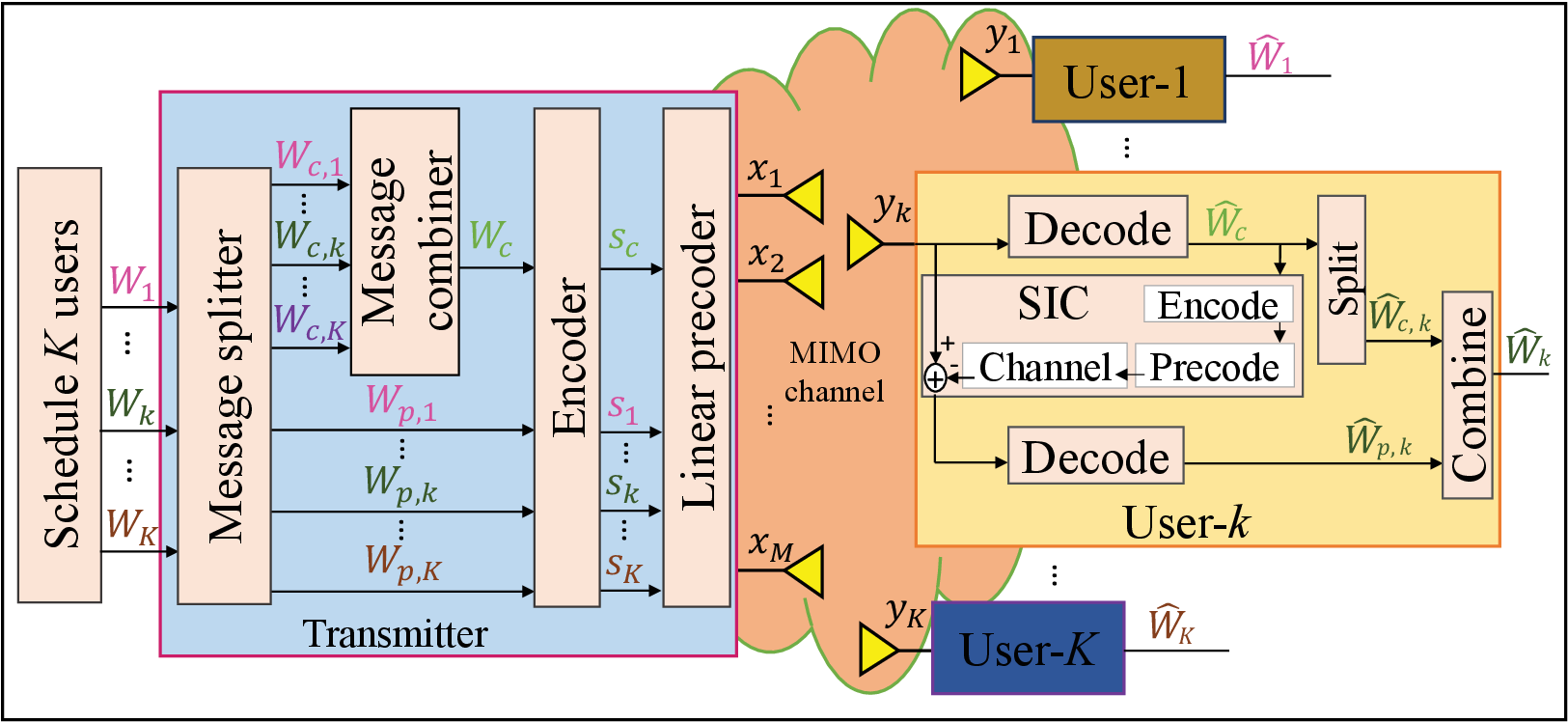}
    \caption{{System model of rate-splitting multiple access}}
\end{figure}
We consider the downlink transmission in a single-cell multi-user multiple-input single-output (MU-MISO) cellular network, comprising a base station (BS) and multiple users.
The BS, equipped with $N_t$ transmit antennas, simultaneously serves a set of $K$ users, each having only one single receive antenna, labeled as $\mathcal{K}=\{1, 2, \ldots, K\}$.
1-layer RS, as proposed in \cite{commag2016clerckx}, is supported during the transmission.
\par 
At the BS,  message $W_k$ of user-$k$ gets split into two parts: a common part $W_{c,k}$ and a private part $W_{p,k}$.
The common parts from all users are collected together and encoded into a unified common stream referred to as $s_c$.
Meanwhile, the private parts are separately encoded into private streams $s_1, \cdots, s_K$.
We assume the streams ${s_c,s_1,\cdots, s_K}$ are independent and follow the distribution $\mathcal{CN}(0, 1)$, i.e., $\mathbb{E}\{s_ks_k^H\} = 1$
and $ \mathbb{E}\{s_ks_j^H\} = 0, \forall k\neq j$, and $k,j\in\{c, 1, 2, \ldots, K\}$.
Each stream $s_k$ is linearly precoded by a beamforming vector $\mathbf{p}_k\in\mathbb{C}^{N_t\times 1}$,
collectively forming the beamforming matrix for all streams, denoted as $\mathbf{P}=[\mathbf{p}_c,\mathbf{p}_1,\cdots,\mathbf{p}_K] \in \mathbb{C}^{N_t\times (K+1)}$. The precoded streams are then superposed and transmitted to the users. The resulting transmit signal sent by the BS is:
\begin{equation}
    \mathbf{x}= \mathbf{p}_c s_c + \sum\nolimits_{j=1}^{K}{\mathbf p}_j s_j.
\end{equation}
The transmit power constraint is given by $ \mathrm{tr}(\mathbf{P}^H\mathbf{P})\leq P_t $,
where $ P_t $ refers to the maximum power the BS can transmit.
\par
The channel vector between the BS and user $k$ is represented by $\mathbf{h}_k\in\mathbb{C}^{N_t\times 1}$.
We assume that the channel responses remain constant throughout one transmission block,
and the BS has access to perfect channel state information (CSI).
The scenario of imperfect CSIT is explored in Section \ref{imperfect CSIT}.
The  signal received by user $k\in\mathcal{K}$ can be expressed as:
\begin{equation}
y_k=\mathbf{h}_k^H\mathbf{p}_c s_c + \mathbf h_k^H\sum\nolimits_{j=1}^{K}{\mathbf p}_j s_j + n_k.
\end{equation}
Here, $n_k\sim \mathcal{CN}(0,\sigma_k^2)$ represents the additive white Gaussian noise (AWGN) at user $k$, where $\sigma_k^2$ denotes the noise power.
\par
Each user first focuses on decoding the common stream $s_c$,
treating the private streams as interference.
Then, employing successive interference cancellation (SIC), each user removes the common stream $s_c$ from their received signals.
Afterward, each user decodes the intended private stream, treating interference from other streams as noise.
The signal-to-interference-plus-noise ratio (SINR) for decoding the common stream $s_c$ at user-$k$ is calculated by:
\begin{equation}
\label{eq:SINRck}
    \gamma_{c, k} = \frac{|\mathbf h_k^H\mathbf p_c|^2}
    {\sum\limits_{j=1}^K|\mathbf h_k^H\mathbf p_j|^2 + \sigma_k^2}.
\end{equation}
The SINR for decoding the private stream $s_k$ at user-$k$ is:
\begin{equation}
\label{eq:SINRpk}
    \gamma_{p, k} = \frac{|\mathbf h_k^H\mathbf p_k|^2}
    {\sum\limits_{j\neq k}^K|\mathbf h_k^H\mathbf p_j|^2 + \sigma_k^2}.
\end{equation}
The instantaneous rates for decoding the common and private streams at user-$k$ are expressed as:
\footnote{For ease of notation, we use the natural logarithm $\log(\cdot)$ instead of the binary logarithm $\log_2(\cdot)$ in this paper. The analysis is readily extended to the binary logarithm case.}
\begin{subequations}
    \begin{align}
        R_{c, k} = \log \left(1+\gamma_{c, k}\right), \label{3}\\
        R_{p, k} = \log \left(1 + \gamma_{p, k}\right). \label{4}
\end{align}    
\end{subequations}
To ensure successful decoding of the common stream $s_c$ by all users,
its rate must not exceed:
\begin{equation}
    R_c = \min\limits_{k\in \mathcal{K}}\left\{R_{c, k}\right\}.
\end{equation}
Let $c_k$ denote the portion of $R_c$ assigned to user-$k$,
following $\sum_{k = 1}^K c_k \leq R_c$ and $c_k \geq 0$ for all $k\in\mathcal{K}$.
The total achievable rate for user-$k$ is then given by $R_k + c_k$. Therefore, the worst-case rate among all users is denoted as $R_{\text{MMF}} = \min_{k\in\mathcal{K}}\{R_k + c_k\}$.
\par 
To ensure fairness among users, our paper focuses on solving the following MMF optimization problem:
    \begin{subequations}\label{P1}
        \begin{align}
            \max_{\mathbf P, \mathbf c} &\min\limits_{k\in\mathcal{K}}\{R_{p, k} + c_k\} \\
            \mathrm{s.t.}\ &\mathbf{1}^T\mathbf{c} \leq R_c, \label{P1 b}\\
            &\mathbf{c} \succeq \mathbf{0}, \label{P1 c}\\
            &\mathrm{tr}\left(\mathbf{P}^H\mathbf{P}\right) \leq P_t \label{P1 d},
        \end{align}
    \end{subequations}
where $\mathbf{c}=[c_1,c_2,\ldots,c_K]$. (\ref{P1 b}) and (\ref{P1 c}) are the constraints for the portion of $R_c$ assigned to each user, and (\ref{P1 d}) is the transmit power constraint.

Problem (\ref{P1}) is particularly challenging due to its non-convex and non-smooth nature,
which makes it different from conventional non-convex minimization problems. Specifically, the non-convexity of problem (\ref{P1}) arises from the SINR expressions $\gamma_{c, k}$ and $\gamma_{p, k}$, while the non-smoothness arises from two sources:
the minimum operation in the common rate expression,
$R_c = \min_{k\in \mathcal{K}}\left\{R_{c, k}\right\}$,
and the worst-case rate among users in the objective function,
$\min_{k\in \mathcal{K}}\{R_k + c_k\}$.
This non-smoothness further complicates the optimization process by introducing a lack of differentiability,
which makes standard techniques relying on smoothness—such as those based on continuous gradients or well-behaved objective functions—ineffective.
Various algorithms have emerged to solve problem (\ref{P1}),
such as WMMSE\cite{yalcin2021, joudeh2017rate}, SCA\cite{xuyunnuo2023max, mao2020max}, 
and GPI\cite{korean}.
These algorithms are briefly summarized below:
\begin{itemize}
    \item \textbf{WMMSE:}
    The WMMSE algorithm converts the non-convex non-smooth MMF problem into block-wise convex problems using the rate-WMMSE relation and auxiliary variables.
    Each block is solved by the solvers in optimization toolboxes iteratively until reaching convergence \cite{yalcin2021, joudeh2017rate}.
    
    \item \textbf{SCA:}
    The SCA algorithm approximates both the non-convex objective function and constraints by a series of convex functions and solves them iteratively until convergence.
    Similar to WMMSE, an optimization toolbox is needed to solve each subproblem \cite{xuyunnuo2023max, mao2020max}.
    
    \item \textbf{GPI:}
    The GPI algorithm approximates the non-smooth minimum function by employing the LSE technique to transform it into a smooth function. Despite this approximation, the resulting problem remains non-convex. This non-convexity is then addressed by the generalized power iteration \cite{korean}.
    
\end{itemize}
Although WMMSE and SCA can achieve a local optimum for problem (\ref{P1}), their computational complexities remain high, primarily due to the inability to derive closed-form solutions for the specific RSMA MMF problem considered in this work. As a result, these methods typically rely on iterative techniques such as the interior-point method, which is commonly implemented using optimization toolboxes like CVX.
While GPI does not depend on any toolbox, it cannot ensure reaching a local optimum due to its use of fixed common rate allocation from the initial step.
The exhaustive search for Lagrangian multipliers also adds to the computational complexity of GPI. 
Due to these limitations, this paper introduces a novel approach using FP and the extragradient algorithm to address the non-convex non-smooth MMF problem (\ref{P1}).

\section{Proposed Optimization Framework}
\label{algorithm}
In this section, we introduce a novel algorithm to tackle problem (\ref{P1}).
To effectively address the complexities inherent in this problem,
we first leverage the concept of FP,
an optimization technique allowing us to transform the original problem into a sequence of convex subproblems.
However, instead of directly resolving these transformed converted convex subproblems, we resort to their Lagrangian dual problems. This helps us unveil a prominent insight: \textit{the Lagrangian dual problem is equivalent to a variational inequality problem}.
Inspired by this insight, we propose an extragradient-based algorithm to effectively address the Lagrangian dual problem. 
Additionally, we provide a refined and low-dimensional optimization framework specifically designed for massive MIMO applications.
This tailored approach serves as a practical solution to enhance computational efficiency in such settings.

\subsection{Fractional Programming Approach}
\label{sec:FP}
Due to its non-convex and non-smooth nature,  solving problem (\ref{P1}) directly is highly challenging.
To address this, in this subsection, we first employ the FP technique to convert problem (\ref{P1}) into a series of convex subproblems.
Then, we focus on solving each convex subproblem.
\par
Employing the Lagrangian dual method introduced in \cite{FPpartone, FPparttwo} to extract terms from the logarithm expression,
we establish lower bounds for the achievable rates of user-$k$.
To facilitate the derivation, we introduce auxiliary variables $\vartheta_{c,k}$ and $\vartheta_{p,k}$ corresponding to the SINR values $\gamma_{c,k}$ and $\gamma_{p,k}$, respectively.
This allows us to define the following lower-bound functions:
\begin{subequations}
   \small
    \begin{align}
        f_{c,k}(\mathbf P,\vartheta_{c,k})\triangleq \log(1+\vartheta_{c,k})-\vartheta_{c,k}+\frac{(1+\vartheta_{c,k})\gamma_{c,k}}{1+\gamma_{c,k}}, \label{eq:fck}\\
        f_{p,k}(\mathbf P,\vartheta_{p,k})\triangleq \log(1+\vartheta_{p,k})-\vartheta_{p,k}+\frac{(1+\vartheta_{p,k})\gamma_{p,k}}{1+\gamma_{p,k}}. \label{eq:fpk}
    \end{align}
\end{subequations}
These functions, $f_{c, k}$ and $f_{p, k}$, act as lower bounds for $R_{c, k}$ and $R_{p, k}$, respectively, i.e.,
\begin{subequations}\label{f < r}
\small
    \begin{align}
        f_{c, k} \leq R_{c, k}, \\
        f_{p, k} \leq R_{p, k},
    \end{align}
\end{subequations}
and the equalities hold when $\frac{\partial f_{c, k}}{\partial \vartheta_{c, k}} = 0$ and $\frac{\partial f_{p, k}}{\partial \vartheta_{p, k}} = 0$, leading to the following optimal values: 
\begin{subequations}\label{optimal theta}
\small
    \begin{align}
        \vartheta_{c, k}^\circ = \gamma_{c, k}, \\
        \vartheta_{p, k}^\circ = \gamma_{p, k}.
    \end{align}
\end{subequations}
However, handling the fractional components in (\ref{eq:fck}) and (\ref{eq:fpk}) is still challenging.
By further leveraging the quadratic transform detailed in \cite{FPparttwo}, we derive alternative lower bounds for $f_{c, k}$ and $f_{p, k}$.
We define new functions $g_{c,k}(\mathbf P,\vartheta_{c,k},\varphi_{c,k})$ and $g_{p,k}(\mathbf P,\vartheta_{p,k},\varphi_{p,k})$ as follows:
\begin{subequations}
\small
    \begin{align}
        &\begin{aligned}
            g_{c,k}&(\mathbf P,\vartheta_{c,k},\varphi_{c,k})\\
            &\triangleq \log(1+\vartheta_{c,k})-\vartheta_{c,k}
            +2\sqrt{1+\vartheta_{c,k}}\Re\{\varphi_{c,k}^H\mathbf h_k^H\mathbf p_c  \} \\
            &-|\varphi_{c,k}|^2\left(|\mathbf h_k^H\mathbf p_c|^2+\sum\limits_{j=1}^K|\mathbf h_k^H\mathbf p_j|^2+\sigma_k^2\right),
        \end{aligned} \\
        &\begin{aligned}
            g_{p,k}&(\mathbf P,\vartheta_{p,k},\varphi_{p,k})\\
            &\triangleq \log(1+\vartheta_{p,k})-\vartheta_{p,k} +2\sqrt{1+\vartheta_{p,k}}\Re\{\varphi_{p,k}^H\mathbf h_k^H\mathbf p_k  \} \\
            &-|\varphi_{p,k}|^2\left(\sum\limits_{j=1}^K|\mathbf h_k^H\mathbf p_j|^2+\sigma_k^2\right).
        \end{aligned}
    \end{align}
\end{subequations}
Here, $\varphi_{c,k}$ and $\varphi_{p,k}$ are introduced as auxiliary variables to separate the fractional terms in $f_{c, k}$ and $f_{p, k}$.
These functions, $g_{c, k}$ and $g_{p, k}$, serve as lower bounds for $f_{c, k}$ and $f_{p, k}$, respectively, i.e.,
\begin{subequations}\label{g < f}
    \begin{align}
        g_{c, k} \leq f_{c, k}, \\
        g_{p, k} \leq f_{p, k}.
    \end{align}
\end{subequations}
The equalities are achieved when $\frac{\partial g_{c, k}}{\partial \varphi_{c, k}} = 0$ and $\frac{\partial g_{p, k}}{\partial \varphi_{p, k}} = 0$, leading to the following optimal values: 
\begin{subequations}\label{optimal phi}
    \begin{align}
        &\varphi_{c,k}^\circ = \frac{\sqrt{1+\vartheta_{c,k}}\mathbf{h}_{k}^{H}\mathbf p_c}{|\mathbf h_k^{H}\mathbf p_c|^2+\sum\limits_{j = 1}^K|\mathbf h_k^{H}\mathbf p_j|^2+\sigma_k^2}, \\
        &\varphi_{p,k}^\circ = \frac{\sqrt{1+\vartheta_{p,k}}\mathbf{h}_{k}^{H}\mathbf p_k}{\sum\limits_{j = 1}^K|\mathbf h_k^{H}\mathbf p_j|^2+\sigma_k^2}.
    \end{align}
\end{subequations}

From (\ref{f < r}) and (\ref{g < f}), we obtain that:
\begin{subequations}\label{relationship}
    \begin{align}
        &R_{c, k} = \max\limits_{\vartheta_{c, k}, \varphi_{c, k}}g_{c,k}(\mathbf P,\vartheta_{c,k},\varphi_{c,k}), \\
        &R_{p, k} = \max\limits_{\vartheta_{p, k}, \varphi_{p, k}}g_{p,k}(\mathbf P,\vartheta_{p,k},\varphi_{p,k}).
    \end{align}
\end{subequations}

By substituting $R_{c, k}$ and $R_{p, k}$ in problem (\ref{P1}) by (\ref{relationship}), problem (\ref{P1}) is equivalently transformed into:
\begin{subequations}\label{P2}
    \begin{align}
        \max_{\mathbf{P}, \mathbf{c}, \boldsymbol{\vartheta}_c, \boldsymbol{\vartheta}_p, \boldsymbol{\varphi}_c, \boldsymbol{\varphi}_p} &\min\limits_{k\in\mathcal{K}}\{g_{p, k}(\mathbf P,\vartheta_{p,k},\varphi_{p,k})+c_k\} \\
        \mathrm{s.t.}\ & \mathbf{1}^T\mathbf{c} \leq \min\limits_{k\in\mathcal{K}}\{g_{c, k}(\mathbf P,\vartheta_{c,k},\varphi_{c,k})\}, \\
        &\mathbf{c} \succeq \mathbf{0}, \\
        &\mathrm{tr}\left(\mathbf{P}^H\mathbf{P}\right) \leq P_t. 
    \end{align}
\end{subequations}
Although problem (\ref{P2}) remains non-convex, it is block-wise convex among variable blocks: $\{\mathbf{P}, \mathbf{c}\}$, $\{\boldsymbol{\vartheta}_c, \boldsymbol{\vartheta}_p\}$, $\{\boldsymbol{\varphi}_c, \boldsymbol{\varphi}_p\}$. Therefore, an alternating optimization (AO) framework can be employed to tackle this problem.
Each optimization block is optimized iteratively while fixing the other two variable blocks at their most recent values.
More precisely, we first find the optimal values of $\boldsymbol{\vartheta}_c$, $\boldsymbol{\vartheta}_p$, $\boldsymbol{\varphi}_c$, and $\boldsymbol{\varphi}_p$ using equations (\ref{optimal theta}) and (\ref{optimal phi}) with fixed $\mathbf{P}$ and $\mathbf{c}$.
Then, we utilize these fixed values to address problem (\ref{P2}).
Once $\boldsymbol{\vartheta}_c$, $\boldsymbol{\vartheta}_p$, $\boldsymbol{\varphi}_c$, and $\boldsymbol{\varphi}_p$ are fixed,
(\ref{P2}) transforms into a convex problem, which is solved using existing solvers in optimization toolboxes such as CVX \cite{grant2008cvx}. 
This is the standard FP algorithm that has been used in many existing works, i.e., in \cite{li2023sum} to address the related RSMA resource optimization problems.
Readers are referred to \cite{FPpartone, FPparttwo} for more details of its convergence analysis.
\par
While the standard FP algorithm can be applied to solve problem (\ref{P2}),
the iterative process of solving a sequence of convex subproblems using the interior-point method—typically implemented via optimization toolboxes such as CVX—leads to substantially increased computational time.
In the subsequent subsections,
our goal is to resolve problem (\ref{P2}) (with fixed $\boldsymbol{\vartheta}_c$, $\boldsymbol{\vartheta}_p$, $\boldsymbol{\varphi}_c$,
and $\boldsymbol{\varphi}_p$) by applying the extragradient method to its Lagrangian dual.
This approach eliminates the need for conventional interior-point method to solve (15) directly,
thereby removing dependence on standard optimization toolboxes like CVX.
Our method not only improves computational efficiency but also effectively addresses the non-convex and non-smooth nature of the MMF problem in RSMA.

\subsection{Lagrangian Dual Problem}
In this subsection, we focus on (\ref{P2}) with optimization variables $\mathbf{P}, \mathbf{c}$, and resort to its Lagrangian dual problem.
Subsequent subsections delve into a detailed exploitation of the Lagrangian dual problem and its solution.
\par 
With fixed values for $\boldsymbol{\vartheta}_c$, $\boldsymbol{\vartheta}_p$, $\boldsymbol{\varphi}_c$,  $\boldsymbol{\varphi}_p$,  and an auxiliary variable $t$ to represent the objective function, problem (\ref{P2})
is reformulated as:
\begin{subequations}
    \label{P3}
    \begin{align}
       {\mathcal{P}(\bm{\vartheta}, \bm{\varphi}):}\ \max\limits_{\mathbf P,\mathbf c,t}\,\, &t\\
        \text{s.t.}\,\,		&t\leq c_k+g_{p, k}(\mathbf{P}) \,\,\,\forall k\in\mathcal{K}, \label{eq:c1}\\
        &\mathbf{1}^T\mathbf{c} \leq g_{c,k}(\mathbf{P}),\,\,\, \forall k\in\mathcal{K}, \label{eq:c2}\\
        &\mathbf{c} \succeq \mathbf{0}, \label{eq:c3}\\
        &\mathrm{tr}(\mathbf{P}^H\mathbf{P}) \leq P_{t} \label{eq:c4}.
    \end{align}
\end{subequations}
Despite the convex nature of problem (\ref{P3}), attempting to solve it directly based on approaches like gradient descent involves a projection operation. However,  deriving a closed-form expression for this projection is rather challenging, leading to a high computational complexity to solve problem (\ref{P3}) directly.
To overcome this obstacle, we turn our focus to the Lagrangian dual problem, which allows us to circumvent the intricate projection operations.
\par 
By introducing the Lagrangian dual variables $\bm\lambda=[\lambda_1,\cdots,\lambda_K]^T$, $\bm\rho=[\rho_1,\cdots,\rho_K]^T$, $\bm\mu=[\mu_1,\cdots,\mu_K]^T$, and $ \omega $ respectively for constraints (\ref{eq:c1})--(\ref{eq:c4}),
the Lagrangian function for problem (\ref{P3}) is defined as:
\begin{equation}\label{Lagrangian function}
    \begin{aligned}
        L(\mathbf P,\mathbf c,t,\bm\lambda, \bm\rho,\bm\mu,\omega)&\triangleq t-\sum\limits_{k=1}^K \lambda_k\left(t-(c_k+g_{p, k}(\mathbf P))\right) \\
        &-\sum\limits_{k=1}^K\rho_k\left(\sum_{j=1}^K c_j-g_{c,k}(\mathbf P)\right)\\
        &+\sum_{k=1}^K\mu_kc_k-\omega(\mathrm{tr}(\mathbf P^H\mathbf P)-P_t).\\
    \end{aligned}
\end{equation}
Given that problem (\ref{P3}) is convex and satisfies Slater's condition,
the strong duality holds \cite{convex}. 
We choose to address the Lagrangian dual problem associated with (\ref{P3}) instead of directly solving problem (\ref{P3}).
The Lagrangian dual problem of (\ref{P3}) is formulated as:
\begin{subequations}
    \label{P4}
    \begin{align}
        \min\limits_{\bm\lambda, \bm\rho,\bm\mu,\omega}&\max\limits_{\mathbf P,\mathbf c,t}\ L(\mathbf P,\mathbf c,t,\bm\lambda, \bm\rho,\bm\mu,\omega) \\
        \text{s.t.}\ \ &\bm\lambda \succeq \mathbf{0}, \bm\rho \succeq \mathbf{0}, \bm\mu \succeq \mathbf{0}, \omega \geq 0. \label{Lagrangian constraints}
    \end{align}
\end{subequations}

\subsection{Variational Inequality Theory}
In this subsection, we introduce the variational inequality theory \cite{zhang2023ultra} and establish its connection to problem (\ref{P4}).
Leveraging this connection, we then solve problem (\ref{P4}) by using the extragradient method detailed in the next subsection.
\par 
\subsubsection{Preliminaries}
Variational inequality theory spans various fields,
finding applications in industry, finance, economics,
and both pure and applied sciences \cite{sedlmayer2023fast}.
A general form of a variational inequality is expressed as follows:
\begin{equation}
    \mathbf{h}(\mathbf{x})^T(\mathbf{x}' - \mathbf{x}) \geq 0,\ \forall \mathbf{x}' \in \mathcal{S}, \label{general VI}
\end{equation}
where $\mathcal{S} \subseteq \mathbb{R}^n$ denotes a closed convex set,
$\mathbf{h}(\cdot)$ is a mapping from $\mathcal{S}$ to $\mathbb{R}^n$,
and $\mathbf{x}$ is the vector to be determined.
This theory stands as a robust framework for solving problems that aim to find a vector satisfying the conditions of (\ref{general VI}).
It is helpful in various domains, especially in optimization.
For instance, it can help determine primal-dual pairs of optimal solutions in constrained convex optimization problems.

\par 
\subsubsection{Connection to problem (\ref{P4})}
Our objective is to maximize the Lagrangian function with respect to $(\mathbf{P},\mathbf{c},t)$ and simultaneously minimize it with respect to  $(\bm\lambda, \bm\rho,\bm\mu,\omega)$.
Therefore, solving problem (\ref{P4}) is equivalent to identifying the saddle point of the Lagrangian function due to the strong duality.
As problem (\ref{P4}) involves both complex and real variables,
we employ a decomposition strategy by splitting complex variables into their real and imaginary parts,
so as to align with the variational inequality theory.
Let us define:
\begin{equation}
\label{eq:yz}
    \mathbf{y} \triangleq
    \left[\begin{array}{c}
        \Re\{\mathbf{p}_c\} \\
        \Im\{\mathbf{p}_c\} \\
        \Re\{\mathbf{p}_1\} \\
        \vdots \\
        \Re\{\mathbf{p}_K\} \\
        \Im\{\mathbf{p}_1\} \\
        \vdots \\
        \Im\{\mathbf{p}_K\} \\
        \mathbf{c} \\
        t
    \end{array}\right],\ \ \ 
    \mathbf{z} \triangleq
    \left[\begin{array}{c}
        \bm\lambda \\
        \bm\rho \\
        \bm\mu \\
        \omega
    \end{array}\right].
\end{equation}
Given the concavity of $L(\mathbf{P},\mathbf{c},t,\bm\lambda, \bm\rho,\bm\mu,\omega)$ in $(\mathbf{P},\mathbf{c},t)$ and its convexity in $(\bm\lambda, \bm\rho,\bm\mu,\omega)$,
the optimality conditions indicate that the optimal solutions of $\mathbf{y}^\circ$ and $\mathbf{z}^\circ$ of problem (\ref{P4}) should satisfy the following inequalities:
\begin{subequations}
    \begin{align}
        &\left(\left.\frac{\partial L}{\partial\mathbf{y}}\right|_{\mathbf{y} = \mathbf{y}^\circ}\right)^T(\mathbf{y}' - \mathbf{y}^\circ) \leq 0, \\
        &\left(\left.\frac{\partial L}{\partial\mathbf{z}}\right|_{\mathbf{z} = \mathbf{z}^\circ}\right)^T(\mathbf{z}' - \mathbf{z}^\circ) \geq 0.
    \end{align}
\end{subequations}
where $\mathbf{y}'$ and $\mathbf{z}'$ denote any other points in the feasible set of problem (\ref{P4}).

\par
With this observation, we can then show that the optimal solution to problem (\ref{P4}) satisfies the variational inequality (\ref{general VI}).
For clarity, by further defining:
\begin{subequations}
\label{eq:xy}
\begin{align}
   & \mathbf{x} \triangleq [\mathbf{y}^T, \mathbf{z}^T]^T,\ \ \ \label{x definition}\\
    & \mathbf{h}(\mathbf{x}) \triangleq \left[-\left(\frac{\partial L}{\partial\mathbf{y}}\right)^T,
    \left(\frac{\partial L}{\partial\mathbf{z}}\right)^T\right]^T, \label{h definition}
    \end{align}
\end{subequations}
problem (\ref{P4}) is then equivalently formulated as  the following variational inequality  problem:
\begin{subequations}
    \label{variatinal inequality}
    \begin{align}
        \text{Find}\ & \mathbf{x} \in \mathcal{S}\\
        \text{s.t.}\ \ &\mathbf{h}(\mathbf{x})^T(\mathbf{x}' - \mathbf{x}) \geq 0,\ \forall \mathbf{x}' \in \mathcal{S},
    \end{align}
\end{subequations}
where $\mathbf{x}$ and $\mathbf{h}(\mathbf{x})$ are defined in (\ref{eq:xy}).
$\mathcal{S}$ is the feasible set of the Lagrangian dual problem (\ref{P4}), defined as:
\begin{equation}
  \mathcal{S} \triangleq \{\mathbf{x}\ |\ \mathbf{P}\in\mathbb{C}^{N_t\times K},\ \mathbf{c}\in \mathbb{R}^K,\ t\in \mathbb{R},\ \mathbf{z} \succeq \mathbf{0}\}. 
\end{equation}

\subsection{Extragradient Method}
\label{sec:EG}
Building upon the analysis in the previous two subsections, it is evident that resolving each subproblem (\ref{P3}) is equivalent to solving the variational inequality problem (\ref{variatinal inequality}).
In this subsection, our target is to tackle the variational inequality problem (\ref{variatinal inequality}) using the extragradient method.
\par
A mapping $\mathbf{h}(\cdot)$ is termed monotone on $\mathcal{S}$ if $\left(\mathbf{h}(\mathbf{x}) - \mathbf{h}(\mathbf{x}')\right)^T(\mathbf{x} - \mathbf{x}') \geq 0$ for all $\mathbf{x}, \mathbf{x}' \in \mathcal{S}$
(strongly monotone when $\left(\mathbf{h}(\mathbf{x}) - \mathbf{h}(\mathbf{x}')\right)^T(\mathbf{x} - \mathbf{x}') > 0$).
Since the Lagrangian dual function $L(\mathbf{P},\mathbf{c},t,\bm\lambda, \bm\rho,\bm\mu,\omega)$ defined in (\ref{Lagrangian function}) is concave in $(\mathbf{P},\mathbf{c},t)$ and convex in $(\bm\lambda, \bm\rho,\bm\mu,\omega)$,
the mapping $\mathbf{h}(\mathbf{x})$ defined in (\ref{h definition}) is monotone.
As indicated in \cite{nonconvex},
when a mapping $\mathbf{h}(\cdot)$ is monotone,
extragradient is a well-known approach for solving the variational inequality problem based on this mapping.
This inspires us to use extragradient to solve problem (\ref{variatinal inequality}).
In the following, we provide a detailed description of how to apply the extragradient algorithm to solve problem  (\ref{variatinal inequality}).
\par 
\subsubsection{Algorithm framework}
The extragradient algorithm  typically involves two iterative steps, necessitated by the alternating process of estimating and refining the solution:
\begin{itemize}
    \item \textit{First iterative step--Prediction}: In this step, the algorithm generates an initial estimation of the solution based on the current solution.
    \item \textit{Second iterative step--Correction}: This step involves correcting the prediction of the solution obtained in the first step, so as to ensure the solution moves towards the optimal solution.
\end{itemize}
By alternating between these two steps, the extragradient algorithm iteratively improves its estimate of the solution until convergence is attained. This two-step approach effectively balances exploration (exploring new solutions) with exploitation (refining promising solutions),  facilitating a gradual convergence towards an optimal solution while ensuring stability and efficiency throughout the optimization process.
    
The iterative steps of the extragradient method for finding the vector $\mathbf{x}$ satisfying (\ref{variatinal inequality})  are outlined as follows.
Assuming that  $\mathbf{x}^{(n)}$ is already obtained, then at iteration $n + 1$, $\mathbf{x}^{(n+1)}$ is updated as:
\begin{subequations}\label{extragradient}
    \begin{align}
    \textbf{Step 1: } & \textbf{Prediction} \nonumber\\
        &\bar{\mathbf{x}}^{(n)} = \operatorname{Proj}_{\mathcal{S}}\left(\mathbf{x}^{(n)} - \alpha^{(n)} \mathbf{h}\left(\mathbf{x}^{(n)}\right)\right), \label{prediction}  \\
    \textbf{Step 2: } & \textbf{Correction} \nonumber\\
        &\mathbf{x}^{(n+1)} = \operatorname{Proj}_{\mathcal{S}}\left(\mathbf{x}^{(n)} - \alpha^{(n)} \mathbf{h}\left(\bar{\mathbf{x}}^{(n)}\right)\right), \label{correction}
    \end{align}
\end{subequations}
where $\bar{\mathbf{x}}^{(n)}$ serves as a provisional solution intended for the projection described in (\ref{correction}),
$\alpha^{(n)}$ denotes the step size,
and $\operatorname{Proj}_{\mathcal{S}}(\cdot)$ is the projection onto the set $\mathcal{S}$.
Specifically, the projection operator $\operatorname{Proj}_{\mathcal{S}}(\cdot)$ is equivalent to $\max (0, \cdot)$ in this paper. 
\par
 It is worth noting that the extragradient algorithm is specifically designed to solve variational inequalities,
 accommodating cases where $\mathbf{h}(\cdot)$ lacks strong monotonicity.
 In contrast, the classical one-step gradient algorithm can only handle strongly monotone problems.
 As our problem involves bilinear terms, such as $\sum_{k=1}^K\mu_kc_k$, which are not strongly monotone,
 the extragradient algorithm is a more fitting choice \cite{marcotte1991application}.  
The extragradient algorithm specified in (25) incorporates the idea of extrapolation from the general gradient method to adjust gradients, in order to accelerate convergence and improve the performance of the general gradient algorithm.
The ``extra'' in the extragradient algorithm refers to the additional step (correction
step) that corrects the iteration direction to ensure convergence.

It is important to highlight that relying solely on executing procedure (\ref{prediction}) can result in $\bar{\mathbf{x}}^{(n)}$ being distant from the optimal solution.
Therefore, procedure (\ref{correction}) is utilized to refine the solution and bring it closer to optimality.
To further clarify this concept, we use a basic min-max problem $\min_{x}\max_{y}\ (xy)$  as an example.
In this scenario, $(x, y)=(0, 0)$ is a unique saddle point.
At any given point $(x^{(n)}, y^{(n)})$,
the direction $\mathbf{h}(x^{(n)}, y^{(n)}) = (y^{(n)}, -x^{(n)})$ stands orthogonal to $(x^{(n)}, y^{(n)})$.
Thus,  executing  (\ref{prediction}) only increases the distance from the saddle point.
If we further employ (\ref{correction}), the direction $-\mathbf{h}(\bar{x}^{(n)}, \bar{y}^{(n)}) = (-\bar{y}^{(n)}, \bar{x}^{(n)})$ tends to approach the saddle point,
where $\bar{x}^{(n)} = {x}^{(n)} - \alpha^{(n)}y^{(n)}$ and $\bar{y}^{(n)} = {y}^{(n)} + \alpha^{(n)}x^{(n)}$.
In Fig. \ref{divergence}, we respectively illustrate the convergence behaviors of using the classical one-step gradient algorithm and the two-step extragradient algorithm to solve the basic min-max problem $\min_{x}\max_{y}\ (xy)$. It is evident that with increasing iterations, only the two-step extragradient algorithm eventually converges to the optimal solution $(0,0)$. Therefore, (\ref{correction}) is essential and necessary.

\begin{figure}[htbp]
    \centering
    \subfloat[One-step gradient algorithm]{\includegraphics[width=0.24\textwidth]{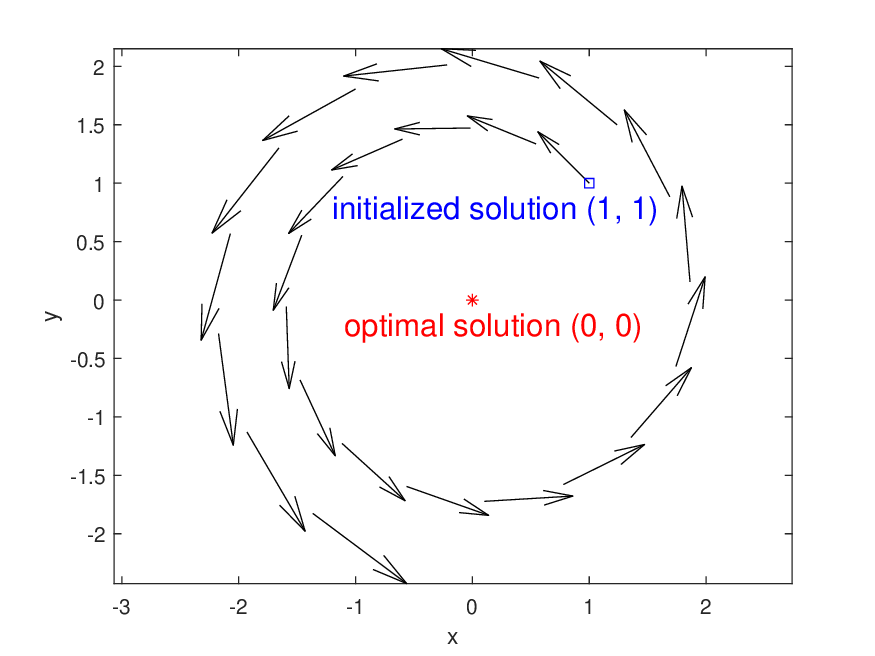}}
    \subfloat[Extragradient algorithm]{\includegraphics[width=0.24\textwidth]{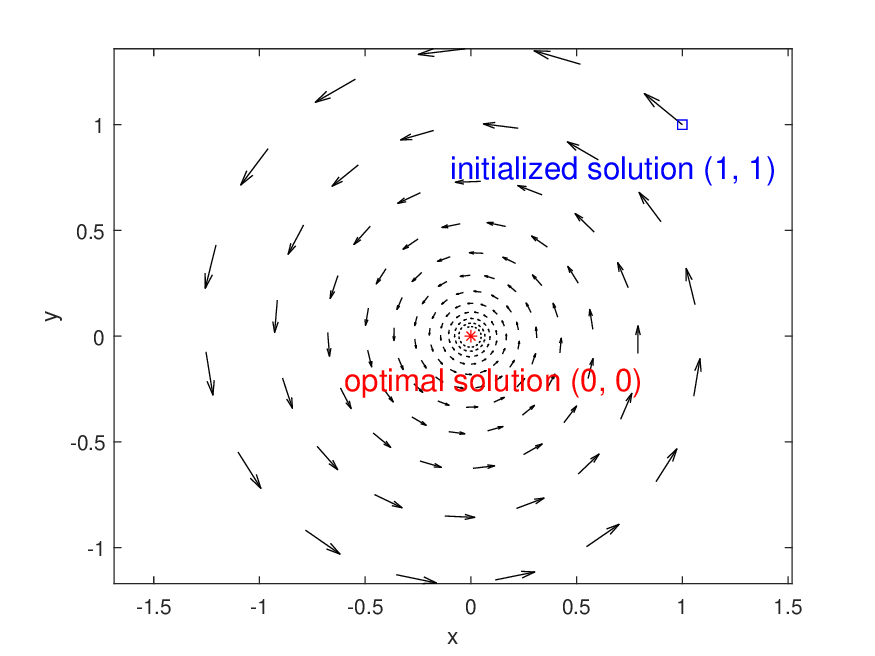}}

    \caption{{Comparison between the optimization trajectories of the one-step gradient algorithm and the extragradient algorithm for solving the toy example problem $\min_{x}\max_{y}\ (xy)$.  The arrow indicates the direction of the solution point as the iteration progresses.}}
    \label{divergence}
\end{figure}

\par 
\subsubsection{Step size design}
\label{step size design}
If the mapping $\mathbf{h}(\cdot)$ is monotone and $L$-Lipschitz continuous,
the extragradient algorithm in (\ref{extragradient}) is guaranteed to converge to the optimal solution to problem  (\ref{variatinal inequality}) if the step size is bounded by $\frac{1}{L}$ \cite{zhang2023ultra}.
However, finding the Lipschitz constant $L$ for the mapping $\mathbf{h}(\cdot)$ in (\ref{variatinal inequality}) is rather challenging in our problem.
To tackle this challenge, we take advantage of  Khobotov's theorem\cite{marcotte1991application} as specified in the following Theorem \ref{Khobotov theorem}.
\begin{theorem}
\label{Khobotov theorem}
    (Khobotov's theorem) Let $\mathcal{S}$ be a closed convex subset of 
    $\mathbb{R}^n$ and $\mathbf{h}(\cdot)$ be a monotone mapping from $\mathcal{S}$ into $\mathbb{R}^n$.
    Let $\mathbf{x}^{(0)}\in \mathcal{S}$.
    $\{\mathbf{x}^{(n)}\}_{n\geq 1}$ and $\{\bar{\mathbf{x}}^{(n)}\}_{n\geq 1}$ are the sequences defined by (\ref{extragradient}).
    Then we have, for any nonnegative sequence ${\alpha^{(n)}}$ and $\mathbf{x}^\circ \in \mathcal{S}$:
    \begin{equation}
    \begin{aligned}
        &\| \mathbf{x}^{(n+1)} - \mathbf{x}^\circ\|^2 \\
        \leq &\| \mathbf{x}^{(n)} - \mathbf{x}^\circ\|^2 \\
        -&\| \mathbf{x}^{(n)} - \bar{\mathbf{x}}^{(n)}\|^2
        \left(1 - \left(\alpha^{(n)}\right)^2\frac{\|h(\mathbf{x}^{(n)}) - h(\bar{\mathbf x}^{n})\|^2}{\|\mathbf{x}^{(n)} - \bar{\mathbf{x}}^{(n)}\|^2}\right),
    \end{aligned}
\end{equation}
\end{theorem}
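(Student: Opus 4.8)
The plan is to derive the inequality directly from the metric properties of the Euclidean projection, using only the monotonicity of $\mathbf{h}(\cdot)$ and the fact that $\mathbf{x}^\circ$ is a solution of the variational inequality (\ref{variatinal inequality}) (consistent with the $(\cdot)^\circ$ optimality notation used throughout). The single workhorse is the standard variational characterization of $\operatorname{Proj}_{\mathcal{S}}(\cdot)$ onto a closed convex set: for any $\mathbf{u}\in\mathbb{R}^n$ and $\mathbf{w}=\operatorname{Proj}_{\mathcal{S}}(\mathbf{u})$, one has $(\mathbf{u}-\mathbf{w})^T(\mathbf{v}-\mathbf{w})\leq 0$ for every $\mathbf{v}\in\mathcal{S}$, equivalently $\|\mathbf{w}-\mathbf{v}\|^2\leq\|\mathbf{u}-\mathbf{v}\|^2-\|\mathbf{u}-\mathbf{w}\|^2$. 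I would apply this twice: once to the correction step (\ref{correction}) with test point $\mathbf{v}=\mathbf{x}^\circ$, and once to the prediction step (\ref{prediction}) with test point $\mathbf{v}=\mathbf{x}^{(n+1)}$ (which lies in $\mathcal{S}$), and then tie the two together.

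Applying the projection inequality to $\mathbf{x}^{(n+1)}=\operatorname{Proj}_{\mathcal{S}}(\mathbf{x}^{(n)}-\alpha^{(n)}\mathbf{h}(\bar{\mathbf{x}}^{(n)}))$ and expanding both squared norms, the $(\alpha^{(n)})^2\|\mathbf{h}(\bar{\mathbf{x}}^{(n)})\|^2$ contributions cancel, leaving $\|\mathbf{x}^{(n+1)}-\mathbf{x}^\circ\|^2\leq\|\mathbf{x}^{(n)}-\mathbf{x}^\circ\|^2-\|\mathbf{x}^{(n)}-\mathbf{x}^{(n+1)}\|^2-2\alpha^{(n)}\mathbf{h}(\bar{\mathbf{x}}^{(n)})^T(\mathbf{x}^{(n+1)}-\mathbf{x}^\circ)$. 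The decisive manipulation is to split the last inner product as $\mathbf{h}(\bar{\mathbf{x}}^{(n)})^T(\mathbf{x}^{(n+1)}-\bar{\mathbf{x}}^{(n)})+\mathbf{h}(\bar{\mathbf{x}}^{(n)})^T(\bar{\mathbf{x}}^{(n)}-\mathbf{x}^\circ)$ and handle each piece by a different argument.

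For the piece $\mathbf{h}(\bar{\mathbf{x}}^{(n)})^T(\bar{\mathbf{x}}^{(n)}-\mathbf{x}^\circ)$, I would use that $\mathbf{x}^\circ$ solves the variational inequality, giving $\mathbf{h}(\mathbf{x}^\circ)^T(\bar{\mathbf{x}}^{(n)}-\mathbf{x}^\circ)\geq 0$, and add the monotonicity inequality $(\mathbf{h}(\bar{\mathbf{x}}^{(n)})-\mathbf{h}(\mathbf{x}^\circ))^T(\bar{\mathbf{x}}^{(n)}-\mathbf{x}^\circ)\geq 0$ to conclude $\mathbf{h}(\bar{\mathbf{x}}^{(n)})^T(\bar{\mathbf{x}}^{(n)}-\mathbf{x}^\circ)\geq 0$; since $\alpha^{(n)}\geq 0$, this term only shrinks the right-hand side and is dropped. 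For the remaining piece I would invoke the prediction-step characterization $(\mathbf{x}^{(n)}-\alpha^{(n)}\mathbf{h}(\mathbf{x}^{(n)})-\bar{\mathbf{x}}^{(n)})^T(\mathbf{x}^{(n+1)}-\bar{\mathbf{x}}^{(n)})\leq 0$ and write $\mathbf{h}(\bar{\mathbf{x}}^{(n)})=\mathbf{h}(\mathbf{x}^{(n)})+(\mathbf{h}(\bar{\mathbf{x}}^{(n)})-\mathbf{h}(\mathbf{x}^{(n)}))$, so that the $\mathbf{h}(\mathbf{x}^{(n)})$-part is bounded by $-2(\mathbf{x}^{(n)}-\bar{\mathbf{x}}^{(n)})^T(\mathbf{x}^{(n+1)}-\bar{\mathbf{x}}^{(n)})$.

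Finally, combining $-\|\mathbf{x}^{(n)}-\mathbf{x}^{(n+1)}\|^2$ with this term via the identity $\mathbf{x}^{(n)}-\mathbf{x}^{(n+1)}=(\mathbf{x}^{(n)}-\bar{\mathbf{x}}^{(n)})+(\bar{\mathbf{x}}^{(n)}-\mathbf{x}^{(n+1)})$, the cross terms telescope exactly, leaving $-\|\mathbf{x}^{(n)}-\bar{\mathbf{x}}^{(n)}\|^2-\|\bar{\mathbf{x}}^{(n)}-\mathbf{x}^{(n+1)}\|^2$ together with the residual $-2\alpha^{(n)}(\mathbf{h}(\bar{\mathbf{x}}^{(n)})-\mathbf{h}(\mathbf{x}^{(n)}))^T(\mathbf{x}^{(n+1)}-\bar{\mathbf{x}}^{(n)})$. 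Bounding this residual by Cauchy--Schwarz and then Young's inequality $2ab\leq a^2+b^2$ yields $(\alpha^{(n)})^2\|\mathbf{h}(\bar{\mathbf{x}}^{(n)})-\mathbf{h}(\mathbf{x}^{(n)})\|^2+\|\bar{\mathbf{x}}^{(n)}-\mathbf{x}^{(n+1)}\|^2$; the second term cancels $-\|\bar{\mathbf{x}}^{(n)}-\mathbf{x}^{(n+1)}\|^2$, and factoring $\|\mathbf{x}^{(n)}-\bar{\mathbf{x}}^{(n)}\|^2$ out of the two surviving terms gives precisely the claimed bound. The main obstacle is not any hard estimate but the bookkeeping: one must orient each projection inequality with the correct test point ($\mathbf{x}^\circ$ for the correction step, $\mathbf{x}^{(n+1)}$ for the prediction step) and keep all signs consistent when fusing monotonicity with the optimality of $\mathbf{x}^\circ$. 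The genuinely nonobvious idea is splitting $\mathbf{h}(\bar{\mathbf{x}}^{(n)})$ into $\mathbf{h}(\mathbf{x}^{(n)})$ plus a difference, which is exactly what makes the prediction-step inequality applicable and isolates the Lipschitz-controllable residual.
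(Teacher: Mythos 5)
Your proof is correct, and it supplies substantially more than the paper does: the paper's ``proof'' of this theorem is a one-sentence deferral to the cited reference \cite{marcotte1991application}, with only a vague remark that the triangle inequality of the $l_2$-norm is the key tool. Your argument is the classical extragradient estimate: the projection characterization $\|\operatorname{Proj}_{\mathcal{S}}(\mathbf{u})-\mathbf{v}\|^2\leq\|\mathbf{u}-\mathbf{v}\|^2-\|\mathbf{u}-\operatorname{Proj}_{\mathcal{S}}(\mathbf{u})\|^2$ applied once to the correction step (test point $\mathbf{x}^\circ$) and once to the prediction step (test point $\mathbf{x}^{(n+1)}$), the split of $\mathbf{h}(\bar{\mathbf{x}}^{(n)})$ into $\mathbf{h}(\mathbf{x}^{(n)})$ plus a difference, the exact cancellation of the cross terms under the identity $\mathbf{x}^{(n)}-\mathbf{x}^{(n+1)}=(\mathbf{x}^{(n)}-\bar{\mathbf{x}}^{(n)})+(\bar{\mathbf{x}}^{(n)}-\mathbf{x}^{(n+1)})$, and Cauchy--Schwarz plus Young's inequality on the residual. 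I checked each of these steps and the bookkeeping reproduces the stated bound exactly; this is also the route taken in the cited reference, so your write-up can be read as the fully worked-out version of what the paper only gestures at.

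One point deserves emphasis because it amounts to a correction of the theorem as printed. The statement claims the bound holds for \emph{any} $\mathbf{x}^\circ\in\mathcal{S}$, but your step that discards $-2\alpha^{(n)}\mathbf{h}(\bar{\mathbf{x}}^{(n)})^T(\bar{\mathbf{x}}^{(n)}-\mathbf{x}^\circ)$ requires $\mathbf{h}(\bar{\mathbf{x}}^{(n)})^T(\bar{\mathbf{x}}^{(n)}-\mathbf{x}^\circ)\geq 0$, which you obtain by combining monotonicity with the assumption that $\mathbf{x}^\circ$ solves the variational inequality (\ref{variatinal inequality}). This assumption is not cosmetic: for arbitrary $\mathbf{x}^\circ$ the inequality is false. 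Take $\mathcal{S}=\mathbb{R}$, $\mathbf{h}(x)=x$, $x^{(n)}=1$, $\alpha^{(n)}=1/2$, so $\bar{x}^{(n)}=1/2$ and $x^{(n+1)}=3/4$; with $x^\circ=2$ the left side equals $25/16$ while the right side equals $13/16$, so the bound fails, whereas with the true VI solution $x^\circ=0$ it holds ($9/16\leq 13/16$). You flagged this correctly by reading $\mathbf{x}^\circ$ as a solution of (\ref{variatinal inequality}), which is also the only way the paper ever uses the theorem (in the step-size design it explicitly sets $\mathbf{x}^\circ$ to be the optimal solution). So your proof is right, and the hypothesis on $\mathbf{x}^\circ$ in the theorem statement should be amended accordingly.
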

\begin{proof}
    The theorem can be proved using the triangle inequality of $l_2$-norm.
    Readers are referred to \cite{marcotte1991application} for more details.
\end{proof}

Khobotov's theorem provides us with a powerful tool to design the step size.
In order to take advantage of Khobotov's theorem, we set $\mathbf{x}^\circ$ to be the optimal solution to problem (\ref{variatinal inequality}), which satisfies:
\begin{equation}
    \mathbf{h}(\mathbf{x}^\circ)^T(\mathbf{x}' - \mathbf{x}^\circ) \geq 0,\ \forall \mathbf{x}' \in \mathcal{S}.
\end{equation}
Khobotov's theorem reveals that we have to adjust the step size $\alpha^{n}$ to make sure $\| \mathbf{x}^{(n+1)} - \mathbf{x}^\circ\| < \| \mathbf{x}^{(n)} - \mathbf{x}^\circ\|$.
By setting $\alpha^{(n)} = \beta \frac{\|\mathbf{x}^{(n)} - \bar{\mathbf{x}}^{(n)}\|}{\|h(\mathbf{x}^{(n)}) - h(\bar{\mathbf x}^{(n)})\|}$,
where $\beta \in (0, 1)$ is a constant, then we have:
\begin{equation}
    \begin{aligned}
        &\| \mathbf{x}^{(n+1)} - \mathbf{x}^\circ\|^2 \\
        \leq &\| \mathbf{x}^{(n)} - \mathbf{x}^\circ\|^2 - \| \mathbf{x}^{(n)} - \bar{\mathbf{x}}^{(n)}\|^2\left(1 - \beta^2\right) \\
        < &\| \mathbf{x}^{(n)} - \mathbf{x}^\circ\|^2.
    \end{aligned}
\end{equation}
This implies that $\mathbf{x}^{(n+1)}$ gets closer to the optimal solution than $\mathbf{x}^{(n)}$. Therefore, the sequence $\left\{\mathbf{x}^{(n)}\right\}$ converges to the optimal solution to problem (\ref{P4}) as $n$ increases.
\par
Inspired by the discovery from Khobotov's theorem, we propose to update the step size of the extragradient method using the following procedure.
In each iteration $n+1$, we initially derive $\bar{\mathbf{x}}^{(n)}$ from (\ref{prediction}) with a initialized $\alpha$.
Then, we compute the step size $\alpha' = \beta \frac{\|\mathbf{x}^{(n)} - \bar{\mathbf{x}}^{(n)}\|}{\|h(\mathbf{x}^{(n)}) - h(\bar{\mathbf x}^{(n)})\|}$.
After that we set $\alpha^{(n)} = \min\{\alpha', \alpha\}$ and use $\alpha^{(n)}$ for iteration $n+1$ to update $\mathbf{x}^{(n+1)}$ from the two steps in (\ref{extragradient}).

\par 
\subsubsection{Gradient and variable updates}
We have already introduced the general optimization framework of the extragradient algorithm in Section \ref{sec:EG}-1 and the design of the step size in Section \ref{sec:EG}-2.
In this part, we provide a comprehensive implementation procedure of applying the extragradient algorithm to solve problem (\ref{variatinal inequality}),
 which is an equivalent transformation of the Lagrangian dual problem (\ref{P4}).

A critical step in the extragradient algorithm outlined in (\ref{extragradient}) is computing the monotone mapping $\mathbf{h}(\mathbf{x})$.
According to its definition in (\ref{eq:xy}),
the update of $\mathbf{h}(\mathbf{x})$ involves calculating the derivatives of the Lagrangian function $L(\mathbf{P},\mathbf{c},t,\bm{\lambda}, \bm{\rho},\bm{\mu},\omega)$ defined in (17).
Next, we specify the derivatives of the Lagrangian function (\ref{Lagrangian function}) with respect to the complex variables $\mathbf{P}$ and real variables $\mathbf{c},t,\bm{\lambda}, \bm{\rho},\bm{\mu},\omega$,
as well as the procedure of updating the corresponding variables using (\ref{extragradient}).

First, we show how to use (\ref{extragradient}) to update complex beamforming variables $\mathbf{P}$.
Suppose that $f(z)$ is a complex function, where $z = x + jy \in \mathbb{C}$, $x = 
\Re (z)$ and $y = \Im (z)$.
The partial derivative  $\frac{\partial f(z)}{\partial z^*}$ with respective to the complex conjugate $z^*$ is defined as $\frac{\partial f(z)}{\partial z^*} \triangleq \frac{1}{2}\left(\frac{\partial f(z)}{\partial x} + j\frac{\partial f(z)}{\partial y}\right)$.
Readers are referred to \cite{hunger631019introduction} for more details of Wirtinger calculus.
The partial derivatives of the beamforming vectors are therefore given by:
\begin{subequations}\label{P gradient}
    \begin{align}
        &\frac{\partial L}{\partial \mathbf{p}_c^*} = \sum_{j = 1}^K\rho_j \left(\varphi_{c,j}\sqrt{1+\vartheta_{c, j}}\mathbf{h}_j - |\varphi_{c, j}|^2\mathbf{h}_j\mathbf{h}_j^H\mathbf{p}_c\right) - \omega \mathbf{p}_c, \\
        &\begin{aligned}
            \frac{\partial L}{\partial \mathbf{p}_k^*} &= \lambda_k \varphi_{p, k}\sqrt{1+\vartheta_{p, k}}\mathbf{h}_k \\
            &- \sum_{j = 1}^K(\lambda_j|\varphi_{p, j}|^2 + \rho_j|\varphi_{c,j}|^2)\mathbf{h}_j\mathbf{h}_j^H\mathbf{p}_k - \omega \mathbf{p}_k.
        \end{aligned}
    \end{align}
\end{subequations}
Based on (\ref{P gradient}), the update procedures of the beamforming vectors,
following the extragradient algorithm described in (\ref{extragradient}),
are given by:
\begin{subequations}\label{extragradient P}
    \begin{align}
        &\bar{\mathbf{p}}_c^{(n)} = \mathbf{p}_c^{(n)} + 2\alpha^{(n)} \left.\frac{\partial L}{\partial \mathbf{p}_c^*}\right|_{\mathbf{x} = \mathbf{x}^{(n)}}, \label{prediction p_c} \\
        &\bar{\mathbf{p}}_k^{(n)} = \mathbf{p}_k^{(n)} + 2\alpha^{(n)} \left.\frac{\partial L}{\partial \mathbf{p}_k^*}\right|_{\mathbf{x} = \mathbf{x}^{(n)}}, \label{prediction p_k} \\
        &\mathbf{p}_c^{(n+1)} = \mathbf{p}_c^{(n)} + 2\alpha^{(n)} \left.\frac{\partial L}{\partial \mathbf{p}_c^*}\right|_{\mathbf{x} = \bar{\mathbf{x}}^{(n)}}, \label{correction p_c} \\
        &\mathbf{p}_k^{(n+1)} = \mathbf{p}_k^{(n)} + 2\alpha^{(n)} \left.\frac{\partial L}{\partial \mathbf{p}_k^*}\right|_{\mathbf{x} = \bar{\mathbf{x}}^{(n)}}. \label{correction p_k}
    \end{align}
\end{subequations}
where $\mathbf{x}$ is defined in (\ref{x definition}),
and it contains all variables.
$\mathbf{x}^{(n)}$ represents the solution point updated at the $n$th iteration.

Similarly, with the gradients of the real common rate allocation variable $c_k$ and the objective function  variable $t$ respectively given as $\frac{\partial L}{\partial c_k} = \lambda_k + \mu_k - \sum_{j = 1}^K \rho_j$ and $\frac{\partial L}{\partial t} = 1 - \sum_{j = 1}^K\lambda_j$,
the update procedures for $c_k$ and $t$ are as follows:
\begin{subequations}\label{extragradient c t}
    \begin{align}
        &\bar{c}_k^{(n)} = c_k^{(n)} + \alpha^{(n)} \left.\frac{\partial L}{\partial c_k}\right|_{\mathbf{x} = \mathbf{x}^{(n)}}, \label{prediction c}\\
        &\bar{t}^{(n)} = t^{(n)} + \alpha^{(n)} \left.\frac{\partial L}{\partial t}\right|_{\mathbf{x} = \mathbf{x}^{(n)}}, \label{prediction t}\\
        &c_k^{(n+1)} = c_k^{(n)} + \alpha^{(n)} \left.\frac{\partial L}{\partial c_k}\right|_{\mathbf{x} = \bar{\mathbf{x}}^{(n)}}, \label{correction c}\\
        &t^{(n+1)} = t^{(n)} + \alpha^{(n)} \left.\frac{\partial L}{\partial t}\right|_{\mathbf{x} = \bar{\mathbf{x}}^{(n)}}. \label{correction t}
    \end{align}
\end{subequations}
It is worth noting that we focus on addressing the Lagrangian dual problem (\ref{P4}) using the proposed extragradient algorithm.
The objective function of (\ref{P4}) is penalized with a weighted sum of the constraint functions and no constraints are imposed on the primal variables $\mathbf P,\mathbf c,t$. 
Therefore, no projection operator is applied in (\ref{extragradient P}) and (\ref{extragradient c t}).

Following the same procedure, the update of the Lagrangian multipliers $\mathbf{z} = [\bm\lambda^T, \bm\rho^T,\bm\mu^T,\omega]^T$ are given by:
\begin{subequations}\label{Lagrangian multipliers update}
    \begin{align}
        &\bar{\mathbf z}^{(n)} = \left(\mathbf{z}^{(n)} - \alpha^{(n)} \left.\frac{\partial L}{\partial \mathbf{z}}\right|_{\mathbf{x} = \mathbf{x}^{(n)}}\right)^+, \label{prediction z}\\
        &\mathbf{z}^{(n+1)} = \left(\mathbf{z}^{(n)} - \alpha^{(n)} \left.\frac{\partial L}{\partial \mathbf{z}}\right|_{\mathbf{x} = \bar{\mathbf{x}}^{(n)}}\right)^+, \label{correction z}
    \end{align}
\end{subequations}
where $\frac{\partial L}{\partial \mathbf{z}} = [(\frac{\partial L}{\partial \bm\lambda})^T, (\frac{\partial L}{\partial \bm\rho})^T, (\frac{\partial L}{\partial \bm\mu})^T, \frac{\partial L}{\partial \omega}]^T$.
The derivatives with respect to $\bm\lambda$, $\bm\rho$, $\bm\mu$ and $\omega$ can be calculated by $\frac{\partial L}{\partial \lambda_k} = -(t - c_k - g_{p, k})$, $\frac{\partial L}{\partial \rho_k} = -\left(\sum_{j = 1}^Kc_j - g_{c, k}\right)$,
$\frac{\partial L}{\partial \mu_k} = c_k$ and $\frac{\partial L}{\partial \omega} = -\left(\mathrm{tr}(\mathbf{P}^H\mathbf{P}) - P_t\right)$, respectively.
Since all Lagrangian multipliers are non-negative,
the projection operator $\operatorname{Proj}_{\mathcal{S}}(\cdot)$ in (\ref{extragradient}) is equivalent to $(\cdot)^+ \triangleq \max(0, \cdot)$ here.

\subsubsection{Proposed extragradient-based FP (EG-FP) algorithm} 
\par
Combining the extragradient algorithm with the FP algorithm in Section \ref{sec:FP}, we propose the extragradient-based FP (EG-FP) algorithm for addressing problem (\ref{P1}).
Fig. \ref{flowchart} shows the framework of the proposed EG-FP algorithm.
The detailed procedure is outlined in Algorithm \ref{extragradient-FP}.
Line 5 fixes $\boldsymbol{\vartheta}_c$, $\boldsymbol{\vartheta}_p$, $\boldsymbol{\varphi}_c$ and $\boldsymbol{\varphi}_p$ to obtain problem (\ref{P3}),
which is a subproblem of problem (\ref{P2}).
The loop running from lines 9--16 focuses on solving problem (\ref{P4}) or (\ref{variatinal inequality}) via the extragradient algorithm.
Notably, lines 10--12 determine the step size using the approach outlined in Section \ref{step size design}.
Line 14 calculates the MMF rate based on the beamforming matrix $\mathbf{P}^{(n+1)}$ for the stopping criterion of the inner loop.
Additionally, line 17 guarantees that the beamforming matrix satisfies the power constraint (\ref{P1 d}).
Lastly, line 18 computes the MMF rate based on the beamforming vectors $\mathbf{P}$ for the stopping criterion of the outer loop.

\begin{figure}
    \centering
    \includegraphics[width=\linewidth]{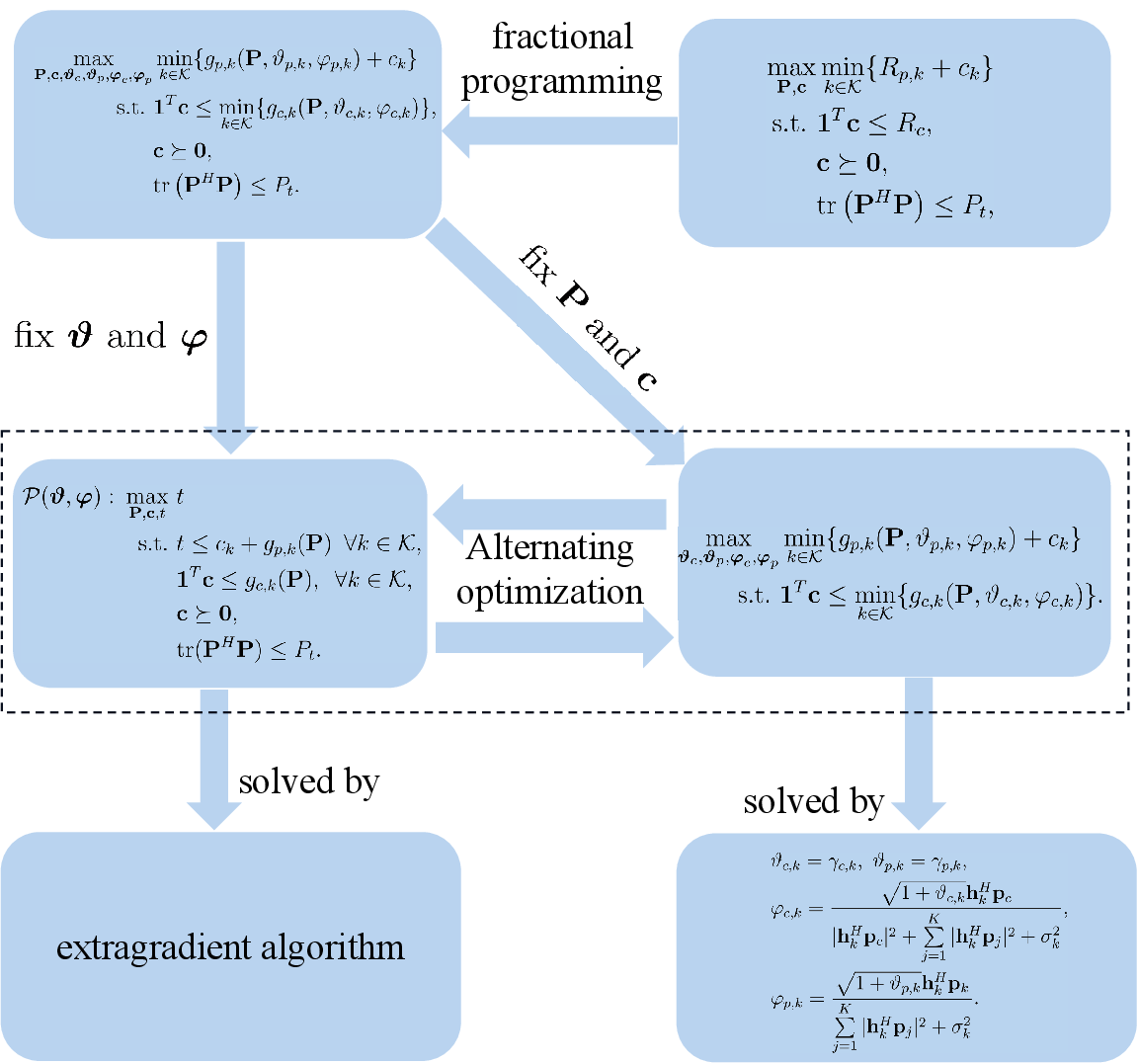}
    \caption{{Framework of the proposed EG-FP algorithm}}
    \label{flowchart}
\end{figure}

\begin{algorithm}[!t]
    \caption{The proposed extragradient-based FP (EG-FP) algorithm for solving problem (\ref{P1}).}
    \label{extragradient-FP}
    \textbf{Initialize:} $\mathbf P, \mathbf{c}, t, \alpha, \beta$;

    $outer\_obj^{(0)} = \text{MMF rate}(\mathbf{P})$.

    $m = 0$.

    \Repeat{$|outer\_obj^{(m)} - outer\_obj^{(m-1)}| < \epsilon_1$}{
        Update the auxiliary variables $\boldsymbol{\vartheta}_c$, $\boldsymbol{\vartheta}_p$, $\boldsymbol{\varphi}_c$ and $\boldsymbol{\varphi}_p$ by (\ref{optimal theta}) and (\ref{optimal phi}).

        $\mathbf{P}^{(0)} = \mathbf{P}$, $n = 0$.
        
        $inner\_obj^{(0)} = \text{MMF rate}(\mathbf{P}^{(0)})$.

        \textbf{Initialize:} $\mathbf{c}^{(0)}$, $t^{(0)}$, $\bm\lambda^{(0)}$, $\bm\rho^{(0)}$, $\bm\mu^{(0)}$, $\omega^{(0)}$.
        
        \Repeat{$|inner\_obj^{(n)} - inner\_obj^{(n-1)}| < \epsilon_2$}{
            Update $\bar{\mathbf{p}}_c^{(n)}$, $\bar{\mathbf{p}}_k^{(n)}$, $\bar{\mathbf{c}}^{(n)}$, $\bar{t}^{(n)}$, $\bar{\mathbf{z}}^{(n)}$ through
            (\ref{prediction p_c}), (\ref{prediction p_k}), (\ref{prediction c}), (\ref{prediction t}), (\ref{prediction z}) with $\alpha^{(n)} = \alpha$.

            Calculate $\alpha' = \beta \frac{\|\mathbf{x}^{(n)} - \bar{\mathbf{x}}^{(n)}\|}{\|h(\mathbf{x}^{(n)}) - h(\bar{\mathbf x}^{(n)})\|}$.

            $\alpha^{(n)} = \min\{\alpha', \alpha\}$.

            Update $\mathbf{p}_c^{(n+1)}$, $\mathbf{p}_k^{(n+1)}$, $\mathbf{c}^{(n+1)}$, $t^{(n+1)}$, $\mathbf{z}^{(n+1)}$ through (\ref{extragradient P}), (\ref{extragradient c t}), (\ref{Lagrangian multipliers update}) with $\alpha^{(n)}$.

            $inner\_obj^{(n+1)} = \text{MMF rate}(\mathbf{P}^{(n+1)})$.

            $n = n + 1$.
        }

        $\mathbf{P} = \sqrt{P_t}\frac{\mathbf{P}^{(n-1)}}{\| \mathbf{P}^{(n-1)} \|_F}$.
        
        $outer\_obj^{(m+1)} = \text{MMF rate}(\mathbf{P})$.
        
        $m = m + 1$.
    }
\end{algorithm}

\subsection{Low-dimensional Beamforming Design}\label{low dimensional form}
In this subsection, we provide a low-dimensional reformulation of problem (\ref{P1}), where the dimension of the beamforming matrix decreases from $N_t\times (K+1)$ to $K\times (K+1)$, independent of the number of transmit antennas.
By applying the proposed EG-FP algorithm to solve the reformulated problem, we propose a novel low-dimensional EG-FP algorithm, which significantly reduces the computational complexity. 
This low-dimensional EG-FP is particularly advantageous in scenarios where the number of transmit antenna $N_t$ is extremely large.
\par
\subsubsection{Optimal beamforming structure}
We first unveil the optimal beamforming structure for problem (\ref{P1}) based on the following Theorem \ref{prop:optBF}.

\begin{theorem}
\label{prop:optBF}
    Each optimal beamforming vector $\mathbf{p}_k$, $k\in\{c,1,2,\ldots,K\}$ of problem (\ref{P1}) is a linear combination of channel vectors $\mathbf{h}_k$, $\forall k\in\mathcal{K}$.
    In other words, each optimal beamforming vector $\mathbf{p}_k$ of problem (\ref{P1}) is to find a weight vector $\mathbf{q}_k \in \mathbb{C}^{K\times 1}$ such that
    \begin{equation}
        \mathbf{p}_k = \mathbf{H}\mathbf{q}_k, \forall k\in\{c,1,2,\ldots,K\},
    \end{equation}
    where $\mathbf{H} \triangleq [\mathbf{h}_1, \cdots, \mathbf{h}_K]$.
\end{theorem}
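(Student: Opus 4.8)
The plan is to read off the optimal beamforming structure directly from the Karush--Kuhn--Tucker (KKT) stationarity conditions, exploiting the gradients already computed in (\ref{P gradient}). The first step is to justify that an optimal precoder of the original problem (\ref{P1}) is a stationary point of the Lagrangian (\ref{Lagrangian function}). Since (\ref{P1}) is equivalently transformed into (\ref{P2}) via FP, and the auxiliary variables attain their optimal values (\ref{optimal theta}) and (\ref{optimal phi}) at the optimum so that all lower bounds are tight, the optimal $\mathbf{P}$ of (\ref{P1}) coincides with the optimal precoder of the convex subproblem (\ref{P3}) with those auxiliary variables held fixed. Strong duality, established just before (\ref{P4}), then guarantees that this precoder together with the optimal multipliers satisfies the stationarity conditions $\partial L/\partial\mathbf{p}_c^* = \mathbf{0}$ and $\partial L/\partial\mathbf{p}_k^* = \mathbf{0}$.

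Next I would rearrange each stationarity condition. Setting the first expression in (\ref{P gradient}) to zero and isolating $\mathbf{p}_c$ yields
\begin{equation}
\omega\,\mathbf{p}_c = \sum_{j=1}^K \rho_j\varphi_{c,j}\sqrt{1+\vartheta_{c,j}}\,\mathbf{h}_j - \sum_{j=1}^K \rho_j|\varphi_{c,j}|^2(\mathbf{h}_j^H\mathbf{p}_c)\,\mathbf{h}_j,
\end{equation}
where I used the identity $\mathbf{h}_j\mathbf{h}_j^H\mathbf{p}_c = (\mathbf{h}_j^H\mathbf{p}_c)\mathbf{h}_j$. The right-hand side is manifestly a linear combination of $\{\mathbf{h}_1,\ldots,\mathbf{h}_K\}$, and an identical manipulation of the second expression in (\ref{P gradient}) shows the same for every private precoder $\mathbf{p}_k$. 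Dividing by $\omega$ then writes each $\mathbf{p}_k$ as $\mathbf{H}$ times a weight vector $\mathbf{q}_k\in\mathbb{C}^{K\times1}$ whose entries collect the scalar coefficients, which is exactly the claimed structure.

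The main obstacle is justifying the division by $\omega$, i.e.\ establishing that the power multiplier is strictly positive. I would argue this via complementary slackness: the power constraint (\ref{P1 d}) must be active at any optimum, because uniformly scaling all beamformers by a factor $\eta>1$ scales every signal and interference term in (\ref{eq:SINRck})--(\ref{eq:SINRpk}) by $\eta^2$ while leaving the noise $\sigma_k^2$ fixed, which strictly increases every SINR and hence the MMF objective. Thus full power is used at the optimum and $\omega>0$. As an independent sanity check, the same conclusion follows from a coordinate-free argument that avoids the multipliers entirely: decomposing $\mathbf{p}_k = \mathbf{p}_k^{\parallel}+\mathbf{p}_k^{\perp}$ into components inside and orthogonal to $\mathrm{span}\{\mathbf{h}_1,\ldots,\mathbf{h}_K\}$, one sees that every term $\mathbf{h}_j^H\mathbf{p}_k = \mathbf{h}_j^H\mathbf{p}_k^{\parallel}$ entering the rates is unaffected by $\mathbf{p}_k^{\perp}$, whereas a nonzero $\mathbf{p}_k^{\perp}$ only wastes transmit power; removing it and rescaling would strictly improve the objective, so any optimal precoder must satisfy $\mathbf{p}_k^{\perp}=\mathbf{0}$ and therefore lie in $\mathrm{span}\{\mathbf{h}_1,\ldots,\mathbf{h}_K\}$.
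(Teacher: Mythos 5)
Your main argument follows the same route as the paper: fix the FP auxiliary variables at their optimal values so that the optimum of problem (\ref{P1})/(\ref{P2}) is also an optimum of the convex subproblem (\ref{P3}), then read the structure off the stationarity conditions $\partial L/\partial\mathbf{p}_c^{*}=\mathbf{0}$, $\partial L/\partial\mathbf{p}_k^{*}=\mathbf{0}$, which is exactly the paper's equation (\ref{KKT}) leading to (\ref{linear combination}). Where you go beyond the paper is in two places, both of which add genuine value. First, the paper's (\ref{linear combination}) silently divides by $\omega$; you correctly identify that $\omega>0$ needs justification and supply it via complementary slackness and the full-power-at-optimum argument. Second, your ``sanity check'' (orthogonal decomposition of $\mathbf{p}_k$ into $\mathrm{span}\{\mathbf{h}_1,\ldots,\mathbf{h}_K\}$ and its complement) is in fact a complete, self-contained proof that avoids FP, Lagrangian duality, and KKT conditions altogether: the rates depend only on $\mathbf{p}_k^{\parallel}$, so a nonzero $\mathbf{p}_k^{\perp}$ wastes power that could be recycled by rescaling. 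This elementary argument is arguably cleaner than the paper's, and it applies verbatim to any locally optimal point, which is what the paper itself needs for the low-dimensional reformulation (\ref{P5}).

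One loose spot: your claim that uniform scaling by $\eta>1$ ``strictly increases every SINR'' is not literally true, since an SINR that is zero (e.g., $\mathbf{h}_k^H\mathbf{p}_k=0$ for a user served only through the common stream, which is a legitimate RSMA operating point) remains zero after scaling. The conclusion still holds, but the argument should be: scaling leaves no SINR decreased, strictly increases every nonzero SINR, and in particular strictly increases $R_c=\min_k R_{c,k}$ whenever the common stream carries positive rate; the resulting slack in constraint (\ref{P1 b}) can then be redistributed among the $c_k$ so that every user's total rate $R_{p,k}+c_k$ strictly increases (the degenerate case where the optimal MMF value is zero is excluded because a strictly positive MMF rate is always achievable for nonzero channels). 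With this patch both your $\omega>0$ step and your orthogonal-decomposition proof are airtight.
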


\begin{proof}
We start by examining the first-order stationary conditions of the KKT conditions for problem (\ref{P3}) given by:
\begin{equation}\label{KKT}
        \frac{\partial L}{\partial \mathbf{p}_c} = 0, \,\,
        \frac{\partial L}{\partial \mathbf{p}_k} = 0.
\end{equation}
By solving (\ref{KKT}), we derive the following optimal solution:
\begin{subequations}\label{linear combination}
    \begin{align}
        &\mathbf{p}_c =
        \frac{1}{\omega}\sum_{j = 1}^K\rho_j \left(\varphi_{c,j}\sqrt{1+\vartheta_{c, j}} - |\varphi_{c, j}|^2\mathbf{h}_j^H\mathbf{p}_c\right)\mathbf{h}_j, \\
        &\begin{aligned}
            \mathbf{p}_k &= \frac{1}{\omega}\lambda_k \varphi_{p, k}\sqrt{1+\vartheta_{p, k}}\mathbf{h}_k \\
            &- \frac{1}{\omega}\sum_{j = 1}^K\left(\lambda_j|\varphi_{p, j}|^2 + \rho_j|\varphi_{c,j}|^2\right)\left(\mathbf{h}_j^H\mathbf{p}_k\right)\mathbf{h}_j, \ \ \ \forall k \in \mathcal{K}.
        \end{aligned}
    \end{align}
\end{subequations}
Equations in (\ref{linear combination}) reveal that the optimal beamforming vectors for the convex subproblem (\ref{P3}) are, in fact, linear combinations of channel vectors.
\par 
Next, we show that the optimal beamforming vectors for the original non-convex non-smooth problem (\ref{P1}) can also be represented by linear combinations of channel vectors in (\ref{linear combination}).
Define $\mathbf{P}^\diamond$ as a locally optimal solution for problem (\ref{P2}).
Employing  this solution to compute (\ref{optimal theta}) and (\ref{optimal phi}) yields the optimal beamforming solution for problem $\mathcal{P}(\bm{\vartheta}(\mathbf{P}^\diamond), \bm{\varphi}(\mathbf{P}^\diamond))$, which is also $\mathbf{P}^\diamond$.
This indicates that any locally optimal beamforming vectors of problem (\ref{P2}) are linear combinations of channel vectors.
Since problem (\ref{P2}) is equivalent to problem (\ref{P1}), any locally optimal beamforming vectors of problem (\ref{P1}) are also linear combinations of channel vectors.
Moreover, as the globally optimal beamforming solution to problem (\ref{P1}) is one of the local-optimal solutions within the same problem (\ref{P1}),  the optimal beamforming vectors can be expressed as linear combinations of channel vectors. This completes our proof.
\end{proof}

\par
\subsubsection{Problem reformulation}
Based on the above analysis, we express $\mathbf{p}_c = \mathbf{H}\mathbf{q}_c$ and $\mathbf{p}_k = \mathbf{H}\mathbf{q}_k$, $\forall k \in \mathcal{K}$,
where $\mathbf{q}_c \in \mathbb{C}^{K\times 1}$ and $\mathbf{q}_k \in \mathbb{C}^{K\times 1}$.
With this transformation, the original problem (\ref{P1}) becomes:
\begin{subequations}\label{P5}
    \begin{align}
        &\max_{\mathbf Q, \mathbf c} \min\limits_{k\in\mathcal{K}}\left\{c_k+\log\left( 1 + \frac{|\mathbf g_k^H\mathbf q_k|^2}
        {\sum_{j\neq k}^K|\mathbf g_k^H\mathbf q_j|^2 + \sigma_k^2}\right)\right\} \\
        \mathrm{s.t.}\ & \mathbf{1}^T\mathbf{c} \leq \min_{k\in\mathcal{K}}\left\{\log\left(1+\frac{|\mathbf g_k^H\mathbf q_c|^2}
    {\sum_{j=1}^K|\mathbf g_k^H\mathbf q_j|^2 + \sigma_k^2}\right)\right\}, \\
        &\mathbf{c} \succeq \mathbf{0}, \\
        &\mathrm{tr}\left(\mathbf{Q}^H\mathbf{G}\mathbf{Q}\right) \leq P_t,
    \end{align}
\end{subequations}
where $\mathbf{G} = [\mathbf{g}_1, \cdots, \mathbf{g}_K] \triangleq \mathbf{H}^H\mathbf{H}$ and $\mathbf{Q} \triangleq [\mathbf{q}_c, \mathbf{q}_1, \cdots, \mathbf{q}_K]$.

It is important to note that the structure of problem (\ref{P5}) closely resembles that of the original problem (\ref{P1}).
However, in contrast to the original problem (\ref{P1}) with $N_t(K+1)+K$ variables, the transformed problem (\ref{P5}) has a significantly smaller dimension of optimization variables--specifically, $K(K+2)$ variables.
This dimension reduction is particularly appealing in massive MIMO systems where $N_t \gg  K$. 

\par
\subsubsection{Proposed low-dimensional EG-FP algorithm}
By exploiting the low-dimensional beamforming structure in the optimal solution (\ref{linear combination}), solving problem (\ref{P5}) instead of (\ref{P1}) can result in substantial computational savings, especially in massive MIMO scenarios. This motivates us to introduce the low-dimensional EG-FP algorithm, wherein we apply the EG-FP algorithm presented in Algorithm \ref{extragradient-FP}  to solve problem (\ref{P5}).
The steps of the proposed low-dimensional EG-FP algorithm are outlined below:
\begin{itemize}
    \item \textit{Step 1}: Transform problem (\ref{P1}) into (\ref{P5}).
    \item \textit{Step 2}: Apply FP  to decompose problem (\ref{P5}) into a series of convex subproblems for each variable block of $\{\mathbf{Q}, \mathbf{c}\}$, $\{\boldsymbol{\vartheta}_c, \boldsymbol{\vartheta}_p\}$, $\{\boldsymbol{\varphi}_c, \boldsymbol{\varphi}_p\}$.
    \item \textit{Step 3}: The subproblems of $\{\boldsymbol{\vartheta}_c, \boldsymbol{\vartheta}_p\}$, $\{\boldsymbol{\varphi}_c, \boldsymbol{\varphi}_p\}$ are solved in closed form as (\ref{optimal theta}) and (\ref{optimal phi}).  The subproblem of $\{\mathbf Q, \mathbf c\}$ is solved by the extragradient method. 
     \item \textit{Step 4}: Repeat Step 3 until convergence.
\end{itemize}
The algorithm framework for Algorithm \ref{extragradient-FP} can be easily extended to the low-dimensional EG-FP algorithm by respectively substituting  $\{\mathbf{h}_k\}$ and $\{\mathbf{p}_k\}$ in Algorithm \ref{extragradient-FP} with $\{\mathbf{g}_k\}$ and $\{\mathbf{q}_k\}$ as defined right below (\ref{P5}). The details are omitted here to eliminate redundancy.

\subsection{Convergence and Computational Complexity Analysis}

\textit{Convergence analysis}:
The step size design,
as outlined in Section \ref{step size design},
provides a brief insight into the convergence analysis.
By setting $\alpha^{(n)} = \beta \frac{\|\mathbf{x}^{(n)} - \bar{\mathbf{x}}^{(n)}\|}{\|h(\mathbf{x}^{(n)}) - h(\bar{\mathbf x}^{n})\|}$,
two contiguous iteration steps display a relationship where $\| \mathbf{x}^{(n+1)} - \mathbf{x}^\circ\| < \| \mathbf{x}^{(n)} - \mathbf{x}^\circ\|$.
This intuitively shows the convergence of the extragradient algorithm.
The detailed proof is provided in \cite{marcotte1991application} for interested readers.

As for the outer-loop convergence,
we provide a theorem here.
\begin{theorem}\label{outer convergence}
    If the inner-loop of Algorithm \ref{extragradient-FP} converges to the optimal solution of problem (\ref{P3}),
    the outer-loop will converge.
\end{theorem}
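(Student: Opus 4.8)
The plan is to recognize Algorithm~\ref{extragradient-FP} as a minorize-maximize (equivalently, block coordinate ascent) scheme and to show that the outer objective sequence $\{outer\_obj^{(m)}\}$ is monotonically non-decreasing and bounded above, hence convergent. Let $F(\mathbf{P},\mathbf{c})$ denote the true MMF objective of problem (\ref{P1}), i.e.\ $\min_{k}\{R_{p,k}+c_k\}$ restricted to the feasible set of (\ref{P1}), and let $\tilde{F}(\mathbf{P},\mathbf{c},\boldsymbol{\vartheta},\boldsymbol{\varphi})$ denote the surrogate objective $\min_k\{g_{p,k}+c_k\}$ on the feasible set of (\ref{P2}). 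The two facts I would extract from Section~\ref{sec:FP} are: (i) the bounds (\ref{f < r}) and (\ref{g < f}) give $g_{p,k}\le R_{p,k}$ and $g_{c,k}\le R_{c,k}$ for every $\mathbf{P}$, so $\tilde{F}$ is a global lower bound (minorizer) of $F$ and the surrogate common-rate constraint $\mathbf{1}^T\mathbf{c}\le\min_k g_{c,k}$ is an inner approximation of $\mathbf{1}^T\mathbf{c}\le\min_k R_{c,k}$; and (ii) by (\ref{relationship}), (\ref{optimal theta}) and (\ref{optimal phi}), setting the auxiliary variables to their closed-form optima at a point $(\mathbf{P}^{(m)},\mathbf{c}^{(m)})$ makes every bound tight, so the minorizer touches $F$ and the surrogate feasible set touches the true feasible set exactly at $(\mathbf{P}^{(m)},\mathbf{c}^{(m)})$.

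Building on this, I would chain the two inner steps of one outer iteration. First, the auxiliary update (line~5) yields $\tilde{F}(\mathbf{P}^{(m)},\mathbf{c}^{(m)},\boldsymbol{\vartheta}^{(m+1)},\boldsymbol{\varphi}^{(m+1)}) = F(\mathbf{P}^{(m)},\mathbf{c}^{(m)})$ and, by the touching property, $(\mathbf{P}^{(m)},\mathbf{c}^{(m)})$ remains feasible for the surrogate problem (\ref{P3}). Second, by the hypothesis of the theorem the inner loop returns a global maximizer $(\mathbf{P}^{(m+1)},\mathbf{c}^{(m+1)})$ of (\ref{P3}), so its surrogate value is at least that of the feasible point $(\mathbf{P}^{(m)},\mathbf{c}^{(m)})$. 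Combining these with the minorization $F\ge\tilde{F}$, and noting that any point feasible for (\ref{P3}) is feasible for (\ref{P1}) because $\min_k g_{c,k}\le\min_k R_{c,k}$, gives the key inequality chain
\begin{equation}
\begin{aligned}
F(\mathbf{P}^{(m+1)},\mathbf{c}^{(m+1)}) &\ge \tilde{F}(\mathbf{P}^{(m+1)},\mathbf{c}^{(m+1)},\boldsymbol{\vartheta}^{(m+1)},\boldsymbol{\varphi}^{(m+1)}) \\
&\ge \tilde{F}(\mathbf{P}^{(m)},\mathbf{c}^{(m)},\boldsymbol{\vartheta}^{(m+1)},\boldsymbol{\varphi}^{(m+1)}) \\
&= F(\mathbf{P}^{(m)},\mathbf{c}^{(m)}),
\end{aligned}
\end{equation}
which establishes $outer\_obj^{(m+1)}\ge outer\_obj^{(m)}$.

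To finish, I would argue boundedness and invoke the monotone convergence theorem. The feasible set of (\ref{P1}) is compact---the power constraint (\ref{P1 d}) confines $\mathbf{P}$ to a closed ball and the common-rate constraints confine $\mathbf{c}$---and each $R_{p,k}$ is continuous and bounded above on this compact set, so $F$ is bounded above by a finite constant; a non-decreasing sequence bounded above converges, proving the claim. The step requiring the most care, and the main obstacle, is the bookkeeping around the coupled common-rate constraint together with the full-power renormalization in line~17, $\mathbf{P}=\sqrt{P_t}\,\mathbf{P}^{(n-1)}/\|\mathbf{P}^{(n-1)}\|_F$: I must verify that this renormalization does not break the monotone chain or destroy the touching/feasibility property~(ii) entering the next outer iteration. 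Since the dual variable $\omega$ associated with (\ref{eq:c4}) is positive at optimality, the power constraint is active and the inner-loop solution already satisfies $\mathrm{tr}(\mathbf{P}^H\mathbf{P})=P_t$, so the renormalization leaves the iterate---and hence its objective value and feasibility---unchanged up to numerical tolerance; I would make this explicit so that the argument closes cleanly.
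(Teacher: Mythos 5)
Your proof is correct and rests on the same two pillars as the paper's: monotonicity of the outer objective sequence (obtained from the tightness of the FP bounds at the auxiliary-variable update plus global optimality of the inner-loop solution, which is exactly where the theorem's hypothesis enters) and boundedness from above, followed by monotone convergence. The difference is in what is tracked: the paper runs a pure block-coordinate-ascent chain on the \emph{surrogate} objective of (\ref{P2}), namely $F(\mathbf{P}^{(n)},\mathbf{c}^{(n)},\cdot^{(n)}) \leq F(\mathbf{P}^{(n+1)},\mathbf{c}^{(n+1)},\cdot^{(n)}) \leq F(\mathbf{P}^{(n+1)},\mathbf{c}^{(n+1)},\cdot^{(n+1)})$, whereas you sandwich with the \emph{true} objective of (\ref{P1}) in MM style. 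Your version buys two things the paper's proof glosses over: explicit propagation of feasibility across outer iterations (via the inner-approximation property of the surrogate common-rate constraint together with the touching property), and a handle on the renormalization in line 17, which the paper's chain silently ignores even though it can modify the iterate between outer iterations. On that last point, though, your specific argument is the one shaky step: the claim that $\omega>0$ forces the power constraint to be active at the inner optimum is not justified---the surrogate functions $g_{c,k}$, $g_{p,k}$ are concave quadratics in $\mathbf{P}$, so for large $P_t$ the maximizer of (\ref{P3}) can lie strictly inside the power ball, in which case line 17 genuinely rescales $\mathbf{P}$. The clean fix is available precisely because you track the true objective: scaling all precoders up by a common factor $t \geq 1$ increases every true SINR $\gamma_{c,k}$ and $\gamma_{p,k}$ (interference scales with the signal while the noise does not), hence never decreases $\min_k\{R_{p,k}+c_k\}$ and preserves $\mathbf{1}^T\mathbf{c}\leq R_c$; therefore the renormalization cannot break your monotone chain, and no activity claim on the power constraint is needed.
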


\begin{proof}
    See Appendix \ref{outer convergence proof}.
\end{proof}

\par
\textit{Computational analysis}: The computational complexities of both EG-FP and low-dimensional EG-FP are analyzed here.
For EG-FP, the predominant complexity within each iteration of Algorithm \ref{extragradient-FP} arises from the gradient calculation for $\mathbf{P}$,
scaling at $\mathcal{O}(K^2N_t)$.
Therefore, the overall complexity of Algorithm \ref{extragradient-FP} is $\mathcal{O}(\log(\epsilon_1^{-1})\sqrt{\epsilon_2^{-1}}K^2N_t)$,
where $\log(\epsilon_1^{-1})$ and $\sqrt{\epsilon_2^{-1}}$ correspond to the iteration counts for the outer and inner iteration loops, respectively.
The outer iteration count is derived from \cite{liping2020}, while the inner iteration count is derived from \cite{nguyen2018extragradient}.
For low-dimensional EG-FP,  the transformation of problem  (\ref{P1}) into its low-dimensional form (\ref{P5}) adds an additional computational complexity of $\mathcal{O}(K^2N_t)$ due to the computation of the matrix
 $\mathbf{G} = \mathbf{H}^H\mathbf{H}$.
Following this transformation,
the optimization framework is similar to the proposed Algorithm \ref{extragradient-FP} by replacing the channel and beamforming matrix with $\mathbf{G}$ and $\mathbf{Q}$, respectively.
Hence, the overall computational complexity for solving problem (\ref{P5}) is $\mathcal{O}(K^2N_t+\log(\epsilon_1^{-1})\sqrt{\epsilon_2^{-1}}K^3)$.

\section{Extension to Imperfect CSIT Scenarios}
\label{imperfect CSIT}
In real-world applications,
perfect CSIT is not achievable.
Imperfect CSIT is inevitable due to various sources of channel estimation errors,
such as user mobility, quantized feedback, and other practical factors.
Therefore, in this section, we extend the algorithm proposed in Section III for perfect CSIT to address the imperfect CSIT scenario,
demonstrating the robustness and applicability of our approach under realistic channel conditions.

\par 
The resource allocation problem of RSMA has been widely studied in imperfect CSIT. There are two different methods emerge to combat the influence of CSIT: worst-case resource optimization \cite{lin2021, fu2020} and long-term resource optimization\cite{yin2022, yin2021, dizdar2021}.
The former method considers bounded CSIT estimation errors,
allocating wireless resources to ensure performance across all feasible channels within corresponding uncertainty regions. This ensures relatively good performance even under the worst-case channel conditions. 
However, in scenarios where CSIT estimation errors are unbounded (e.g. when the transmitter only possesses Gaussian distribution knowledge of the CSIT estimation error),
a worst-case resource optimization becomes unavailable.
In comparison, the latter method, long-term resource optimization, smooths the impact of CSIT estimation errors by designing beamforming vectors and common rate allocation based on the averaged performance over a long-term sequence of fading channels.
\par In this section,  we focus on the long-term MMF rate optimization problem of RSMA. Different from the conventional algorithms that typically employ sample average approximation (SAA) to transform the stochastic long-term optimization problem into a deterministic one, we derive lower bounds of the ergodic rates for all streams.
Then,  we extend the algorithms proposed in Section \ref{algorithm}  to solve the corresponding lower-bound MMF rate problem in imperfect CSIT. 

\subsection{System Model with Imperfect CSIT}

We presume that the channel state, represented by $\mathbf{H} \triangleq [\mathbf{h}_1, \cdots, \mathbf{h}_K]$,
experiences variations during transmission following an ergodic stationary process with the probability density $f_\mathbf{H}(\mathbf{H})$.
The users are assumed to possess highly accurate estimations and tracking of their channel vectors, namely perfect channel state information at the receiver (CSIR).
However, the BS only knows an imperfect instantaneous channel estimate denoted as $\widehat{\mathbf H} \triangleq [\widehat{\mathbf h}_1, \cdots, \widehat{\mathbf h}_K]$.
This entire process is described by the joint distribution of $\{\mathbf{H}, \widehat{\mathbf H}\}$, which is assumed to be stationary and ergodic.
Considering a given estimate $\widehat{\mathbf{H}}$ and the estimation error matrix $\mathbf{E}$, the relationship between $\mathbf{H}$, $\widehat{\mathbf{H}}$, and $\mathbf{E}$ is:
\begin{equation}\label{channel model}
\mathbf{H} = \widehat{\mathbf{H}} + \mathbf{E},
\end{equation}
where the error $\mathbf{E}$ follows the conditional probability density $f_{\mathbf{H}|\widehat{\mathbf H}}(\mathbf{H}|\widehat{\mathbf H})$.
For each user-$k$, the marginal density of the $k$th channel is denoted as $f_{\mathbf{h}_k|\widehat{\mathbf h}_k}(\mathbf{h}_k|\widehat{\mathbf h}_k)$.
We assume the distribution's mean is represented by the estimate, i.e., $\mathbb{E}_{\mathbf{h}_k|\widehat{\mathbf h}_k}\{\mathbf{h}_k|\widehat{\mathbf h}_k\} = \widehat{\mathbf h}_k$. Additionally, we assume that  $\mathbb{E}_{\mathbf{h}_k|\widehat{\mathbf h}_k}\{\mathbf{h}_k\mathbf{h}_k^H|\widehat{\mathbf h}_k\} = \mathbf{R}_k$,
where $\mathbf{R}_k$ denotes the $k$th user’s CSIT error covariance matrix.

While the BS lacks the ability to predict instantaneous rates,
it has access to the average rates (ARs), defined as:
\begin{subequations}
\label{eq:AR}
    \begin{align}
        &\widehat{R}_{c, k}(\widehat{\mathbf H})
        \triangleq \mathbb{E}_{\mathbf{H}|\widehat{\mathbf H}}\left\{  \log\left( 1+ \frac{|{\mathbf h}_k^H\mathbf p_c|^2}
        {\sum_{j=1}^K|{\mathbf h}_k^H\mathbf p_j|^2 + \sigma_k^2} \right) \right\}, \label{AR common}\\
        &\widehat{R}_{p, k}(\widehat{\mathbf H})
        \triangleq \mathbb{E}_{\mathbf{H}|\widehat{\mathbf H}}\left\{ \log\left( 1+  \frac{|{\mathbf h}_k^H\mathbf p_k|^2}
        {\sum_{j\neq k}^K|{\mathbf h}_k^H\mathbf p_j|^2 + \sigma_k^2} \right)  \right\}. \label{AR private}
    \end{align}
\end{subequations}
For successful decoding of both common and private streams by the $k$th user over the long-term channel sequence, the BS aims to transmit these messages at ergodic rates (ERs), which can be obtained by taking the average of ARs over the entire domain of $\widehat{\mathbf H}$ as \cite{Joudeh2016sumrate}:
\begin{subequations}
    \begin{align}
        &\overline{R}_{c, k} = \mathbb{E}_{\widehat{\mathbf H}} \left\{ \widehat{R}_{c, k}(\widehat{\mathbf H}) \right\}, \\
        &\overline{R}_{p, k} = \mathbb{E}_{\widehat{\mathbf H}} \left\{ \widehat{R}_{p, k}(\widehat{\mathbf H}) \right\}.
    \end{align}
\end{subequations}
For successful SIC, the code rate of the common stream $s_c$ is set as the minimum among the users' ERs, resulting in the ER of the common message as:
\begin{equation}
    \begin{aligned}
        \overline{R}_c \triangleq \min_{k\in \mathcal{K}}\left\{ \overline{R}_{c, k} \right\}.
    \end{aligned}
\end{equation}

\subsection{Problem Formulation and Optimization Framework with Imperfect CSIT}
The primary challenge in handling ER-based resource optimization problems arises from the complexity introduced by the expectation operation, preventing the derivation of closed-form expressions for the ERs. 
A conventional approach to dealing with ERs is SAA,
which is introduced by Joudeh et al. \cite{Joudeh2016sumrate} for RSMA.
Such approach requires a substantial number of channel samples from the channel distribution and formulates problems based on these channel samples. However, this method is highly complex due to the large amount of channel samples and the computations involved.
In this subsection, to overcome this limitation,
we follow \cite{korean} and derive lower bounds for the ERs.
These lower bounds offer deterministic closed-form expressions,
which are more tractable within our proposed optimization framework.

\par
The AR $\widehat{R}_{c, k}$ defined in (\ref{AR common}) is constrained by the following lower bounds:
\begin{equation}\label{lower bound derivation}
    \begin{aligned}
        \widehat{R}_{c, k}
        &= \mathbb{E}_{\mathbf{H}|\widehat{\mathbf H}}\left\{  \log\left( 1+ \frac{|{\mathbf h}_k^H\mathbf p_c|^2}
        {\sum_{j=1}^K|{\mathbf h}_k^H\mathbf p_j|^2 + \sigma_k^2} \right) \right\}\\
        &= \mathbb{E}_{\mathbf{H}|\widehat{\mathbf H}}\left\{  \log\left( 1+ \frac{|{(\widehat{\mathbf h}}_k + {\mathbf e}_k)^H\mathbf p_c|^2}
        {\sum_{j=1}^K|(\widehat{\mathbf h}_k + {\mathbf e}_k)^H\mathbf p_j|^2 + \sigma_k^2} \right) \right\} \\
        &\overset{(a)}{\geq} \mathbb{E}_{\mathbf{H}|\widehat{\mathbf H}}\left\{  \log\left( 1+ \frac{|{\widehat{\mathbf h}}_k^H\mathbf p_c|^2}
        {\begin{array}{ll}
            \sum_{j=1}^K|\widehat{\mathbf h}_k^H\mathbf p_j|^2 + |{\mathbf e}_k^H\mathbf{p}_c|^2 \\
            + \sum_{j=1}^K|\mathbf{e}_k^H\mathbf{p}_j|^2 + \sigma_k^2
        \end{array}
        } \right) \right\} \\
        & \overset{(b)}{\geq} \log\left( 1+ \frac{|{\widehat{\mathbf h}}_k^H\mathbf p_c|^2}
        {\begin{array}{l}
            \sum_{j=1}^K|\widehat{\mathbf h}_k^H\mathbf p_j|^2 + \mathbf{p}_c^H\mathbb{E}_{\mathbf{H}|\widehat{\mathbf H}}\{\mathbf{e}_k\mathbf{e}_k^H\}\mathbf{p}_c \\
            + \sum_{j=1}^K\mathbf{p}_j^H\mathbb{E}_{\mathbf{H}|\widehat{\mathbf H}}\{\mathbf{e}_k\mathbf{e}_k^H\}\mathbf{p}_j + \sigma_k^2
        \end{array}
        } \right) \\
        & = \log\left( 1+ \frac{|{\widehat{\mathbf h}}_k^H\mathbf p_c|^2}
        {\begin{array}{l}
            \sum_{j=1}^K|\widehat{\mathbf h}_k^H\mathbf p_j|^2 + \mathbf{p}_c^H\mathbf{R}_k\mathbf{p}_c \\
            + \sum_{j=1}^K\mathbf{p}_j^H\mathbf{R}_k\mathbf{p}_j + \sigma_k^2
        \end{array}
        } \right) \triangleq \widehat{R}_{c, k}^{lb},
    \end{aligned}
\end{equation}
where inequality (a) is obtained from treating $\mathbf{e}_k^H\mathbf{p}_cs_c + \sum_{j=1}^K \mathbf{e}_k^H \mathbf{p}_j s_j$ as independent Gaussian noise,
and inequality (b) is derived by applying Jensen's inequality \cite{convex}.
Similarly, the AR $\widehat{R}_{p, k}$ defined in (\ref{AR private}) is constrained by:
\begin{equation}
    \begin{aligned}
        \widehat{R}_{p, k}
        &= \mathbb{E}_{\mathbf{H}|\widehat{\mathbf H}}\left\{  \log\left( 1+ \frac{|{\mathbf h}_k^H\mathbf p_k|^2}
        {\sum_{j\neq k}^K|{\mathbf h}_k^H\mathbf p_j|^2 + \sigma_k^2} \right) \right\} \\
        & = \log\left( 1+ \frac{|{\widehat{\mathbf h}}_k^H\mathbf p_k|^2}
        {
            \sum_{j\neq k}^K|\widehat{\mathbf h}_k^H\mathbf p_j|^2 
            + \sum_{j=1}^K\mathbf{p}_j^H\mathbf{R}_k\mathbf{p}_j + \sigma_k^2
        } \right) \\
        &\triangleq \widehat{R}_{p, k}^{lb}.
    \end{aligned}
\end{equation}

Using inequality (\ref{lower bound derivation}), the ER of the common stream is therefore bounded by:
\begin{equation}
    \begin{aligned}
        \overline{R}_{c}
        &= \min\limits_{k \in \mathcal{K}} \left\{ \mathbb{E}_{\widehat{\mathbf H}} \left\{ \widehat{R}_{c, k}(\widehat{\mathbf H}) \right\} \right\}\geq \min\limits_{k \in \mathcal{K}} \left\{ \mathbb{E}_{\widehat{\mathbf{H}}}\left\{\widehat{R}_{c, k}^{lb}(\widehat{\mathbf H})\right\} \right\} \\
        &\overset{(c)}{\geq} \mathbb{E}_{\widehat{\mathbf{H}}} \left\{ \min\limits_{k \in \mathcal{K}} \left\{\widehat{R}_{c, k}^{lb}(\widehat{\mathbf H})\right\} \right\},
    \end{aligned}
\end{equation}
where inequality (c) is derived from Jensen's inequality,
leveraging the concavity of the function $\min\{\cdot\}$.
Similarly, the ER of each private stream is bounded by:
\begin{equation}
     \overline{R}_{p, k}
     = \mathbb{E}_{\widehat{\mathbf{H}}}\left\{\widehat{R}_{p, k}(\widehat{\mathbf H})\right\}
     \geq \mathbb{E}_{\widehat{\mathbf{H}}}\left\{\widehat{R}_{p, k}^{lb}(\widehat{\mathbf H})\right\}.
\end{equation}
Therefore, the stochastic ERs are reduced to deterministic closed-form lower bound expressions. By replacing $R_c$ and $R_{p, k}$ in problem (\ref{P1}) with $\min_{k \in \mathcal{K}} {\widehat{R}_{c, k}^{lb}(\widehat{\mathbf H})}$ and $\widehat{R}_{p, k}^{lb}(\widehat{\mathbf H})$, we formulate the  MMF problem in imperfect CSIT as follows:
\begin{subequations}\label{P6}
    \begin{align}
        &\max_{\mathbf P, \mathbf c} \min\limits_{k\in\mathcal{K}}\left\{\widehat{R}_{p, k}^{lb}(\widehat{\mathbf H}) + c_k\right\} \label{P6 0} \\
        \mathrm{s.t.}\ & \mathbf{1}^T\mathbf{c} \leq \min\limits_{k \in \mathcal{K}} \left\{\widehat{R}_{c, k}^{lb}(\widehat{\mathbf H})\right\}, \label{P6 1}\\
        &\mathbf{c} \succeq \mathbf{0}, \label{P6 2}\\
        &\mathrm{tr}\left(\mathbf{P}^H\mathbf{P}\right) \leq P_t.
    \end{align}
\end{subequations}

The structure of problem (\ref{P6}) closely resembles that of problem (\ref{P1}),
allowing us to employ both EG-FP and low-dimensional EG-FP algorithms to solve the problem.
A detailed explanation of the corresponding optimization framework is omitted here to eliminate redundancy.

\section{Numerical Results}
\label{numerical results}
In this section,
we evaluate the performance of the proposed EG-FP and low-dimensional EG-FP algorithms against two other baseline schemes: namely, SCA \cite{yalcin2021, joudeh2017rate} and GPI \cite{korean}, as described in Section \ref{system model}.
The WMMSE algorithm is not considered here since it is known as a special instance of SCA, yielding nearly identical performance  \cite{Joudeh2016sumrate, SEandEE2019}.
It is worth noting that the computational complexity of WMMSE for RSMA, given by $\mathcal{O}(\log (\epsilon^{-1}) (KN_t)^{3.5})$ \cite{SEandEE2019},
is of the same order of magnitude as that of the SCA-based algorithm, resulting in comparable CPU times, as observed in our previous work \cite{luo2023practical}.
The algorithms considered in the numerical results are outlined below:
\begin{enumerate}
    \item \textbf{EG-FP}: This is the algorithm we proposed in Algorithm \ref{extragradient-FP}. It has a worst-case computational complexity of $\mathcal{O}\left(\log(\epsilon_1^{-1})\sqrt{\epsilon_2^{-1}}K^2N_t\right)$.

    \item \textbf{Low-dimensional EG-FP}:
    This is the algorithm we proposed in Section \ref{low dimensional form}. Its worst-case computational complexity is $\mathcal{O}\left(\log(\epsilon_1^{-1})\sqrt{\epsilon_2^{-1}}K^3+K^2N_t\right)$.

    \item \textbf{SCA}:
    This algorithm, as per \cite{mao2018energy}, has a worst-case computational complexity of $\mathcal{O}(\log(\epsilon^{-1})(KN_t)^{3.5})$.
    \item \textbf{GPI}:
    This algorithm, proposed in \cite{korean},
    demonstrates a worst-case computational complexity of $\mathcal{O}((\#\gamma)\log(\epsilon^{-1})KN_t^{3})$,
    where $\#\gamma$ denotes the exhaustive search count for the Lagrangian multiplier $\gamma$ (as specified in \cite{korean}).
\end{enumerate}
Only SCA requires to use an optimization toolbox in each iteration to solve the corresponding convex subproblem.
All SCA simulations in this section utilize the widely adopted optimization toolbox CVX \cite{grant2008cvx}.
The other three algorithms are all optimization toolbox-free algorithms.
As EG-FP has higher computational complexity than the low-dimensional EG-FP when $N_t>K$,
but both achieve locally optimal solutions,
in the following analysis,
EG-FP is employed when $N_t\leq K$,
while the low-dimensional EG-FP is adopted when $N_t>K$.

In all numerical results,
the noise variance is uniformly set to $\sigma_k^2 = 1$ for all $k\in\mathcal{K}$. Therefore, the transmit power $P_t$ is numerically equivalent to the transmit SNR.
The convergence tolerance for all iteration loops is fixed to  $10^{-3}$ for all schemes.
Unless stated otherwise,
all results are averaged over 100 random channel realizations.

\begin{figure}[htbp]
    \centering
    \subfloat[\label{convergence perfect}Perfect CSIT]{\includegraphics[width=0.24\textwidth]{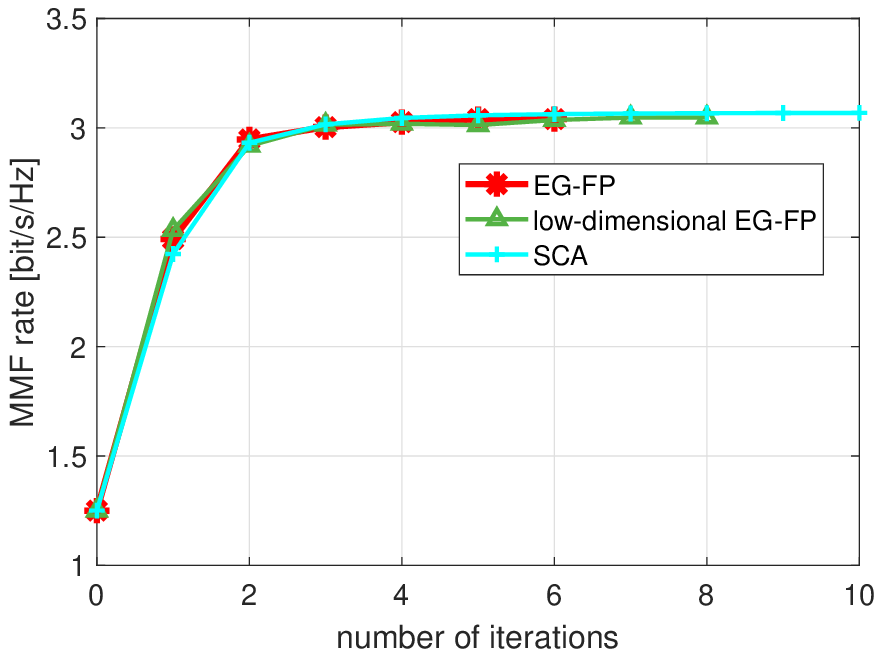}}
    \hfill
    \subfloat[\label{convergence imperfect}Imperfect CSIT]{\includegraphics[width=0.24\textwidth]{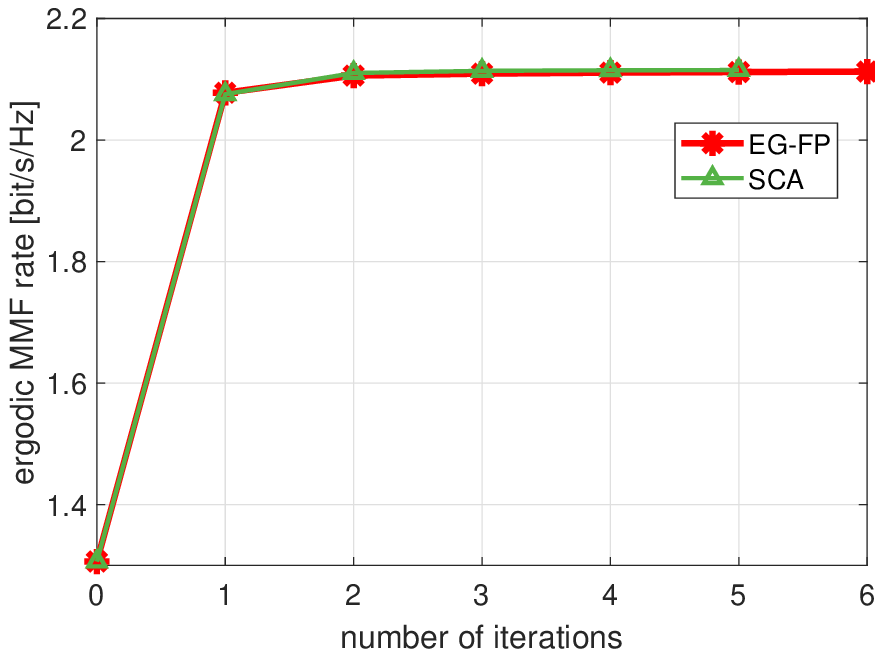}}
    \caption{Convergence of different algorithms in perfect and imperfect CSIT, where $K = 4$ and $N_t = 16$.}
    \label{convergence}
\end{figure}    

We start by examining the convergence behavior of the proposed algorithms.
Fig. \ref{convergence}(a) shows the MMF rate against the number of iterations in a specific channel realization with $K = 4$ and $N_t = 16$ in the perfect CSIT scenario.
It is worth noting that the outer layer of GPI involves an exhaustive search for the optimal Lagrangian multiplier,
which is not an iterative process.
GPI is therefore not included in Fig. \ref{convergence}.
We can observe from the figure that the proposed EG-FP and low-dimensional EG-FP algorithms achieve MMF rates almost identical to that of the SCA algorithm.
This aligns with our theoretical analysis.
Despite converging within almost the same number of steps as the SCA algorithm,
EG-FP and low-dimensional EG-FP attain much lower computational complexity per step,
as we will illustrate shortly.
This is the primary reason EG-FP executes faster than SCA.
Fig. \ref{convergence}(b) shows the convergence behavior of the algorithms in a specific channel realization with $K = 4$ and $N_t = 16$ in the imperfect CSIT scenario,
which shows a similar convergence behavior to the perfect CSIT scenario.

\begin{figure}[htbp]
    \centering
    \subfloat[\label{cumulative perfect}Perfect CSIT]{\includegraphics[width=0.24\textwidth]{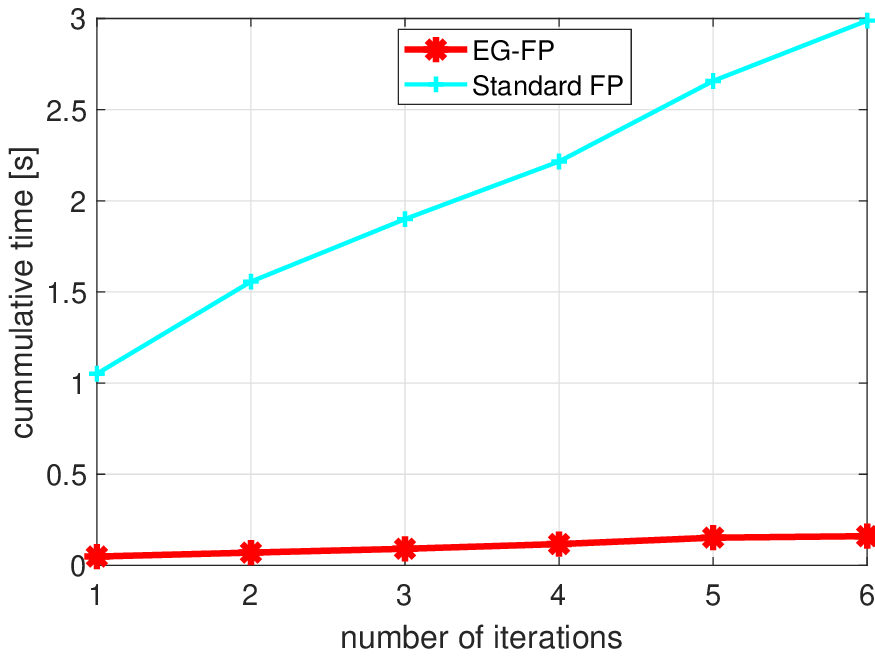}}
    \hfill
    \subfloat[\label{cumulative imperfect}Imperfect CSIT]{\includegraphics[width=0.24\textwidth]{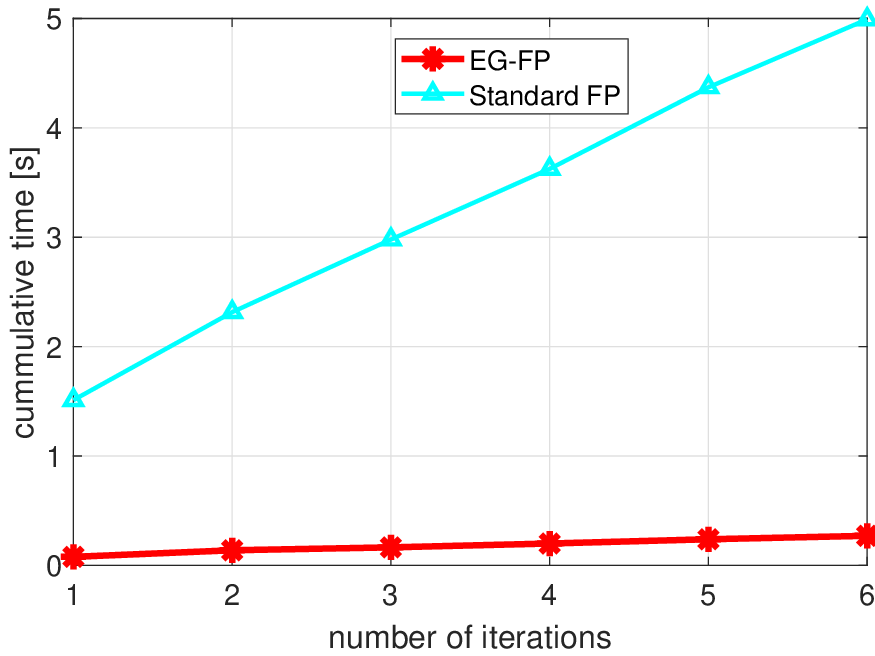}}
    \caption{Cumulative time of different algorithms in perfect and imperfect CSIT, where $K = 4$ and $N_t = 16$.}
    \label{cumulative}
\end{figure}

Fig. \ref{cumulative} illustrates a detailed comparison of the cumulative computational time required by the proposed EG-FP algorithm and the standard FP algorithm, which uses CVX to solve problem (\ref{P3}). Since both algorithms use the same FP framework to transform the original RSMA MMF problem into a series of convex beamforming subproblems but differ in subproblem solvers, this comparison shows the impact of our proposed EG.
The simulation settings are the same as Fig. \ref{convergence}.
Across both scenarios, EG-FP demonstrates a clear superiority in terms of cumulative computational time.
Its cumulative time remains much lower than that of the standard FP,
often by an order of magnitude,
making it a highly efficient choice for MMF problem of RSMA.
Moreover, as the number of iterations increases,
the cumulative time of the standard FP rises faster compared to EG-FP.

\subsection{Perfect CSIT}

In all scenarios with perfect CSIT,
the channel between the BS and user $k$ follows an i.i.d. complex Gaussian distribution denoted as $\mathbf{h}_k \sim \mathcal{CN}(\mathbf{0}, \mathbf{I})$.
The transmit signal-to-noise ratio (SNR) is fixed to $10$ dB.

In Fig. \ref{K_equals_Nt}(a),
the average MMF rate versus the number of transmit antennas or users is illustrated, where $N_t = K$. 
{Due to the increasing computational
cost of SCA and GPI with the number of transmit antennas,
we exclusively present their results for scenarios involving 2 to
16 transmit antennas and users.}
It demonstrates that the proposed EG-FP achieves almost the same performance as the SCA algorithm in both small-scale and large-scale networks.  Algorithm \ref{extragradient-FP} effectively solves the original MMF problem and achieves sub-optimal solutions.
In contrast, the MMF rate achieved by the GPI algorithm is notably worse than that of the EG-FP approach. 
The issue arises from the utilization of a fixed common rate allocation in GPI at its initial stage. Even if GPI re-allocates the common rate at the second stage, the beamforming vectors have already been fixed, restricting the potential adjustments to the common rate allocation.

\par
The average CPU time versus the number of transmit antennas or users is illustrated in Fig. \ref{K_equals_Nt}(b) under the same simulation setting as Fig. \ref{K_equals_Nt}(a).
It shows that EG-FP significantly reduces the CPU time compared to the SCA algorithm.
Specifically, the average computational time of EG-FP is nearly three orders of magnitude lower than that of the SCA algorithm.
Although EG-FP takes longer runtime than GPI when $K = 2$ and $K = 4$,
the resulting MMF rate of EG-FP is much higher than GPI.
Although each EG-FP inner iteration is computationally efficient, hundreds of them occur per outer iteration.
For GPI's exhaustive search, we set the step size of the Lagrangian multiplier to 2 over the range [0, 50], resulting in lower average CPU time for fewer than 4 users.
Moreover, the CPU time of the GPI algorithm increases rapidly as the number of users increases.

\begin{figure}[htbp]
    \centering
    \subfloat[\label{K_equals_Nt_rate}Average MMF rate]{\includegraphics[width=0.24\textwidth]{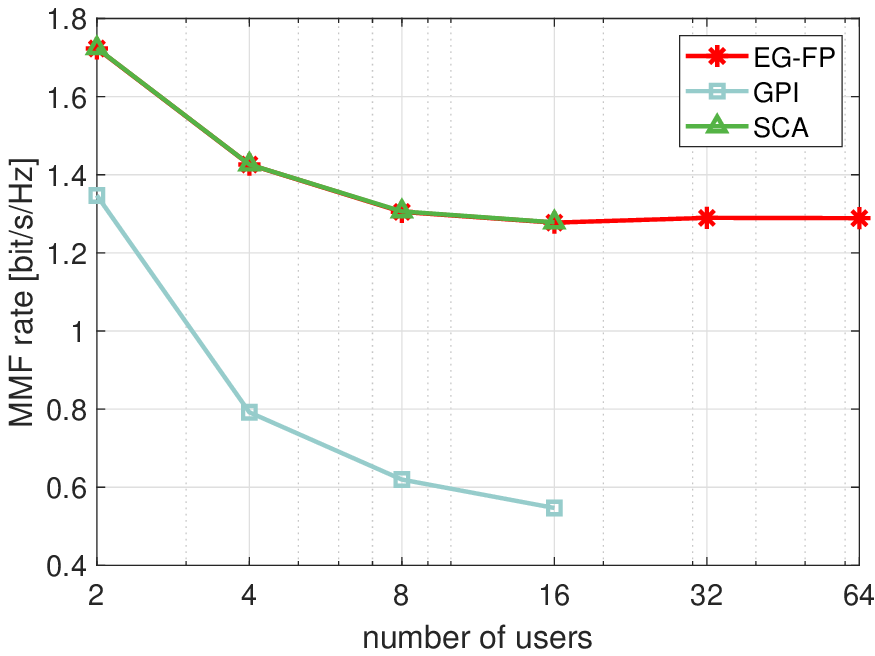}}
    \hfill
    \subfloat[\label{K_equals_Nt_time}Average CPU time]{\includegraphics[width=0.24\textwidth]{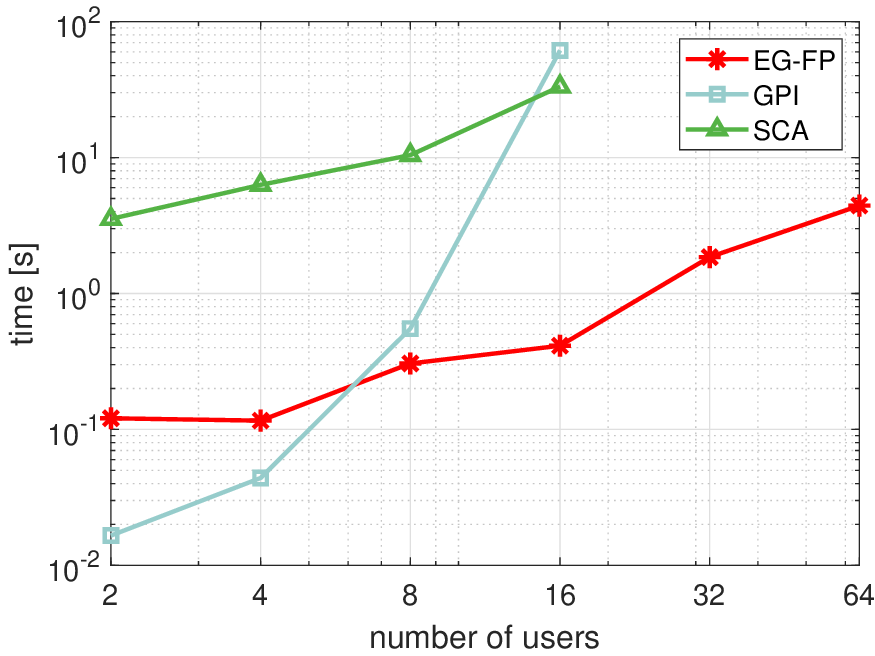}}
    \caption{Average MMF rate and CPU time versus the number of transmit antennas or users, where $N_t = K$.}
    \label{K_equals_Nt}
\end{figure}

\begin{figure}[htbp]
    \centering
    \subfloat[\label{Nt_varies_rate}Average MMF rate]{\includegraphics[width=0.24\textwidth]{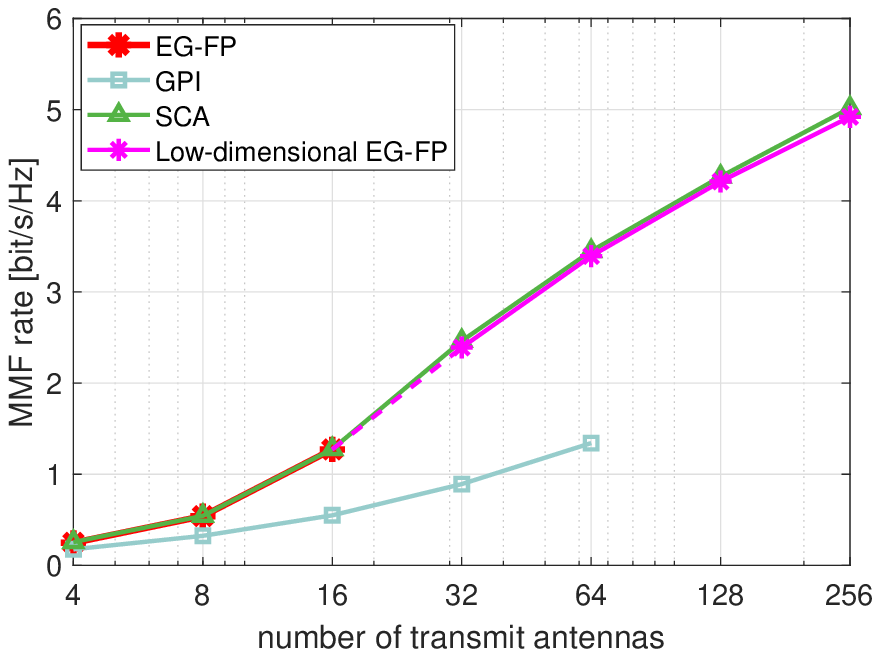}}
    \hfill
    \subfloat[\label{Nt_varies_time}Average CPU time]{\includegraphics[width=0.24\textwidth]{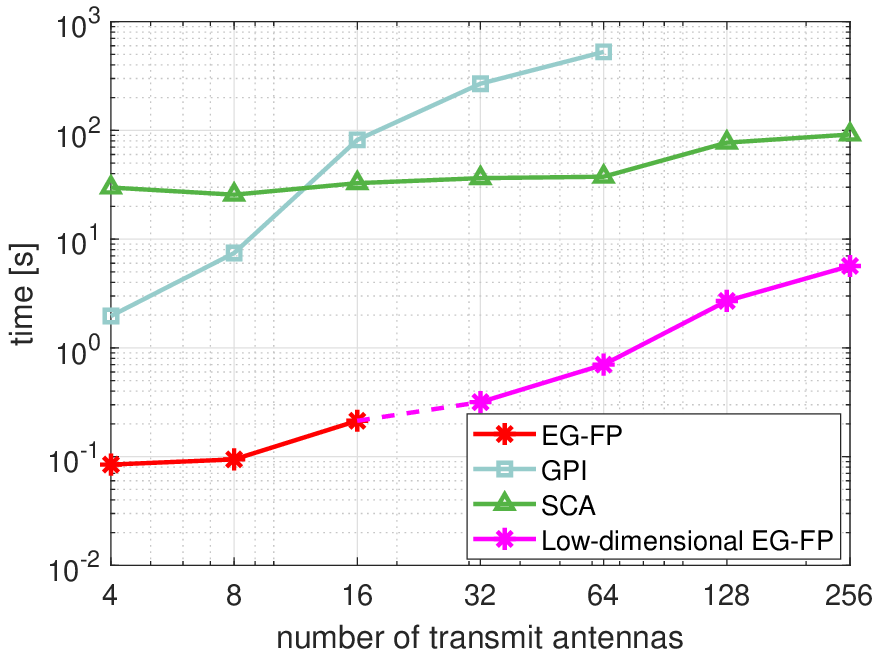}}
    \caption{Average MMF rate and CPU time versus the number of transmit antennas, where $K =16$.}
    \label{Nt_varies}
\end{figure}

\par 
In Fig. \ref{Nt_varies}(a), we illustrate the average MMF rates of different algorithms versus the number of transmit antennas.
The user number is fixed to 16. 
In the scenario of $N_t \leq K$, we use EG-FP while in the scenario of $N_t > K$, we use the low-dimensional EG-FP instead.
We only illustrate the results of GPI for scenarios with 4 to 64 transmit antennas due to its high computational complexity.
The results also show that the proposed algorithms achieve nearly the same MMF rate as the SCA algorithm for all different numbers of transmit antennas.
In contrast, the MMF rate of GPI falls below that of EG-FP, indicating that GPI cannot guarantee locally optimal solutions.

\par
Fig. \ref{Nt_varies}(b) illustrates the average CPU time for various algorithms versus the number of transmit antennas,
with the same simulation setting as in Fig. \ref{Nt_varies}(a).
Although the average CPU time of the SCA increases slowly with the number of transmit antennas, the overall time scale remains quite large.
The proposed two algorithms only take less than 7\% average CPU time of SCA.
The proposed low-dimensional EG-FP efficiently manages massive MIMO scenarios,
requiring only 5.6638 seconds to solve the MMF resource optimization problem of RSMA with little performance loss when the number of transmit antennas is 256.

\begin{figure}[htbp]
    \centering
    \subfloat[\label{low_dim_vs_normal_rate}Average MMF rate]{\includegraphics[width=0.49\linewidth]{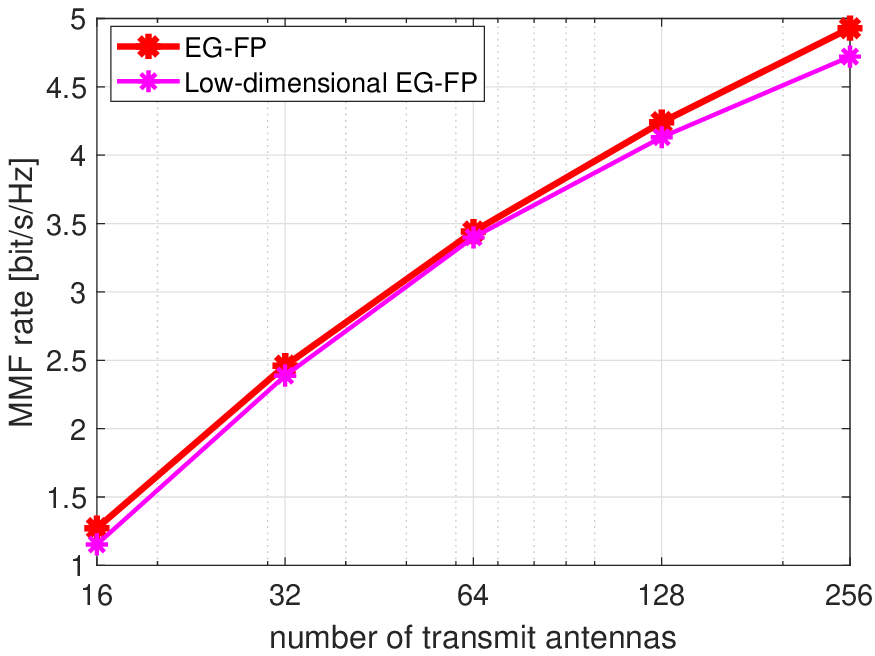}}
    \hfill
    \subfloat[\label{low_dim_vs_normal_time}Average CPU time]{\includegraphics[width=0.49\linewidth]{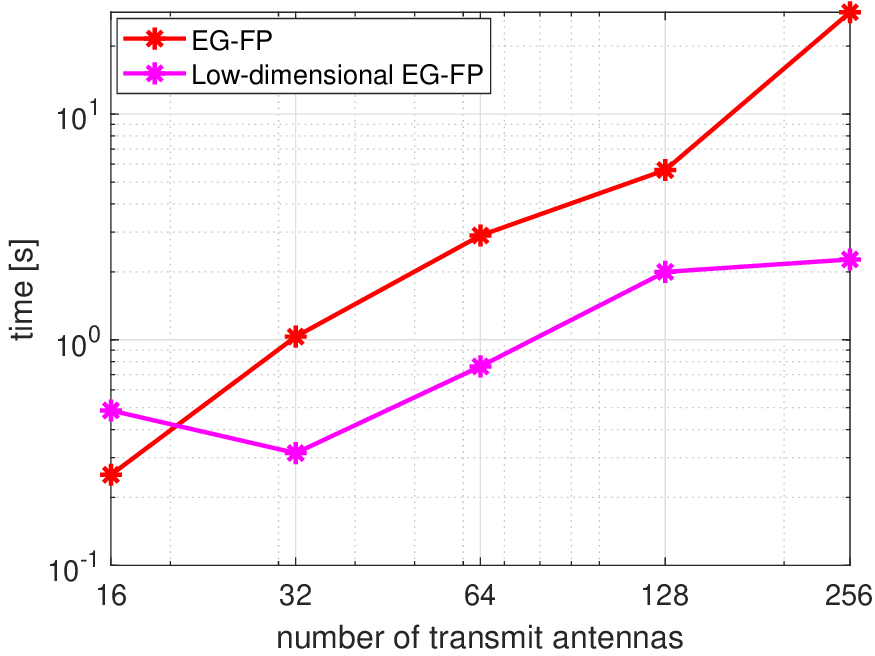}}
    \caption{Average MMF rate and CPU time versus the number of transmit antennas, where $K =16$.}
    \label{low_dim_vs_normal}
\end{figure}

Fig. \ref{low_dim_vs_normal} compares the performance of the proposed EG-FP and low-dimensional EG-FP algorithms when the number of transmit antennas $N_t$ increases.
Since the low-dimensional EG-FP algorithm is introduced in Section III-E for scenarios where $N_t>K$,
the results are illustrated exclusively for these scenarios.
The results show that the low-dimensional EG-FP algorithm achieves almost the same MMF rate as the EG-FP algorithm for all different numbers of transmit antennas.
The average CPU time of the low-dimensional EG-FP algorithm is lower than that of the EG-FP algorithm except for the case of $K = N_t = 16$.
This is because the low-dimensional EG-FP algorithm has to do the matrix multiplication $\mathbf{H}^H\mathbf{H}$.
It is more efficient than EG-FP only when $N_t>K$.

\subsection{Imperfect CSIT}

\begin{figure}[htbp]
    \centering
    \subfloat[\label{Nt_varies_rate_imperfect}Average ergodic MMF rate]{\includegraphics[width=0.24\textwidth]{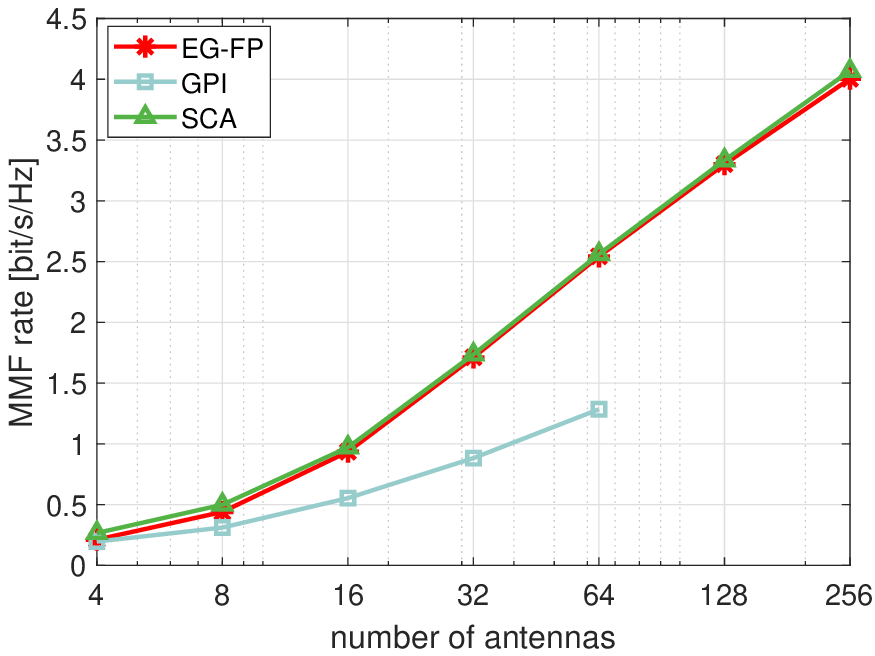}}
    \hfill
    \subfloat[\label{Nt_varies_time_imperfect}Average CPU time]{\includegraphics[width=0.24\textwidth]{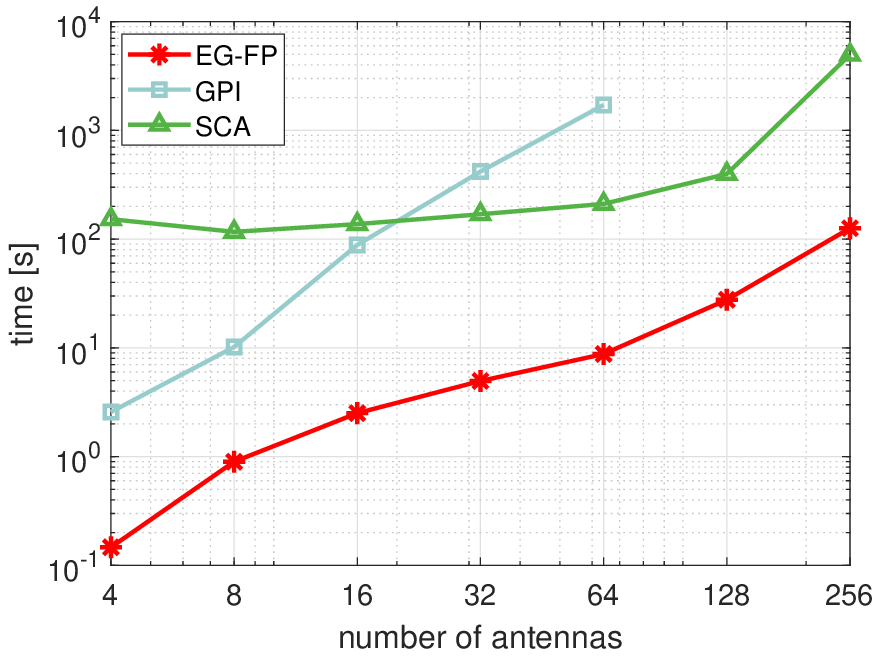}}
    \caption{Average ergodic MMF rate and CPU time versus the number of transmit antennas, where $K = 16$.}
    \label{Nt_varies_imperfect}
\end{figure}

\begin{figure}[htbp]
    \centering
    \subfloat[\label{imperfect_rate}Average ergodic MMF rate]{\includegraphics[width=0.24\textwidth]{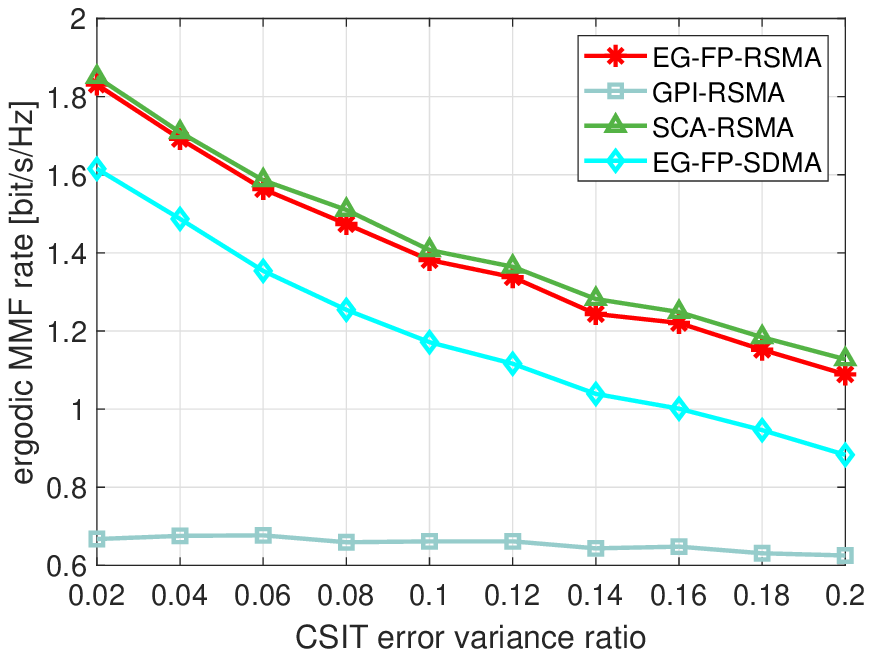}}
    \hfill
    \subfloat[\label{imperfect_time}Average CPU time]{\includegraphics[width=0.24\textwidth]{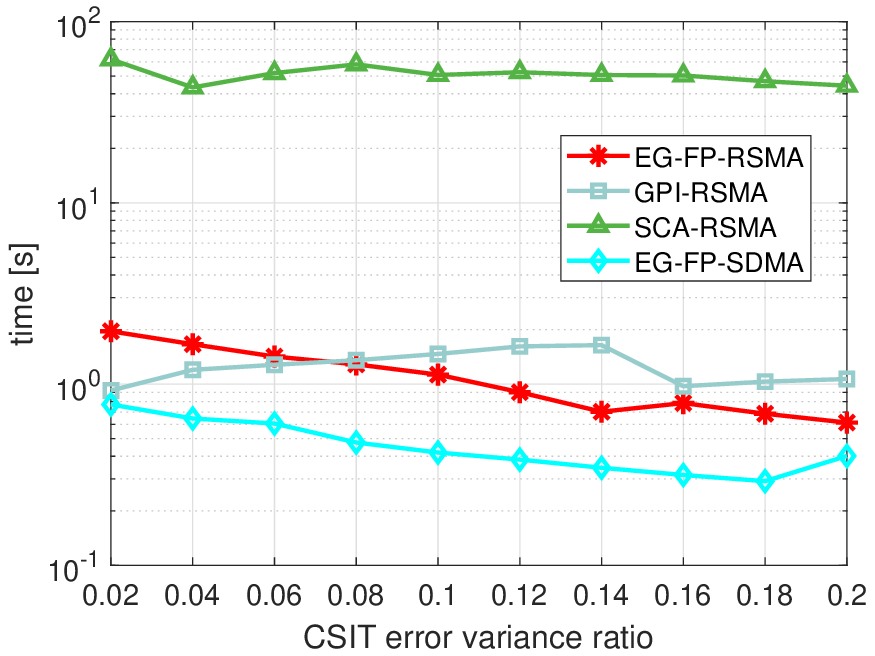}}
    \caption{Average ergodic MMF rate and CPU time versus CSIT error variance ratio,  where $N_t = K = 8$.}
    \label{imperfect}
\end{figure}

In this subsection,
we investigate the impact of imperfect CSIT.
The transmit signal-to-noise ratio (SNR) is fixed to $20$ dB.
The user channels are modeled using the same approach as outlined in \cite{clerckx2021noma}.
Specifically, based on the imperfect CSIT model specified in (\ref{channel model}),
the channel estimation $\widehat{\mathbf h}_k$ is generated by following a complex normal distribution $\mathcal{CN}(\mathbf{0}, (1 - \kappa)\sigma_k^2\mathbf{I})$,
while the estimation error $\mathbf{e}_k|\widehat{\mathbf h}_k$ follows $\mathcal{CN}(\mathbf{0}, \kappa\sigma_k^2\mathbf{I})$.
Here, $\sigma_k^2$ represents the channel variance of user-$k$.
It is randomly generated from the interval $[0.1, 1]$ to capture diverse path-loss conditions.
For convenience, we refer to $\kappa$ as the CSIT error variance ratio.

We first investigate the impact of the number of transmit antennas on the performance of the proposed algorithm.
Fig. \ref{Nt_varies_imperfect}(a) illustrates the average MMF rate versus the number of transmit antennas,
where the number of users is fixed to 16.
The missing results for GPI when the number of transmit antennas exceeds 64 are due to its exceptionally high computational complexity at larger antenna configurations.
The results show that the proposed EG-FP algorithm achieves almost the same MMF rate as the SCA algorithm for all different numbers of transmit antennas.
The MMF rate of GPI is notably worse than that of the proposed algorithm.
Fig. \ref{Nt_varies_imperfect}(b) illustrates the average CPU time for various algorithms versus the number of transmit antennas,
with the same simulation setting as in Fig. \ref{Nt_varies_imperfect}(a).
The proposed EG-FP only takes less than 9\% average CPU time of SCA and GPI.
Compared to the perfect CSIT scenario, all algorithms take more CPU time in the imperfect CSIT scenario.
This is because the denominator of the rate lower bound expression becomes more complex due to the presence of the CSIT error.

Besides GPI and SCA, we also compare the performance of RSMA with SDMA.
Since SDMA does not have any common stream,
by disabling the common rate allocation and the precoder of the common stream (i.e., setting $\mathbf{c}=\bm 0$ and $\mathbf{p}_c=\bm 0$),
the MMF problem of RSMA reduces to the MMF problem of SDMA as:
\begin{subequations} \label{sdma}
    \begin{align}
        &\max_{\mathbf P_p} \min\limits_{k\in\mathcal{K}}\left\{\widehat{R}_{p, k}^{lb}(\widehat{\mathbf H})\right\} \\
        \mathrm{s.t.}\ & \mathrm{tr}\left(\mathbf{P}_p^H\mathbf{P}_p\right) \leq P_t, 
    \end{align}
\end{subequations}
where $\mathbf{P}_p=[\mathbf{p}_1,\cdots,\mathbf{p}_K] \in \mathbb{C}^{N_t\times K}$, and $\widehat{R}_{p, k}^{lb}(\widehat{\mathbf H})$ is defined in  (\ref{lower bound derivation}).
The proposed EG-FP algorithm can be directly applied to address (\ref{sdma}) since it is a special instance of the RSMA problem.

Fig. \ref{imperfect}(a) and Fig. \ref{imperfect}(b) respectively illustrate the ER and the average CPU time versus the CSIT error variance ratio $\kappa$,
where $N_t = K = 8$.
Similar to the scenario with perfect CSIT,
the ER of EG-FP closely aligns with that of SCA, and it outperforms GPI significantly.
It only takes less than 5\%  average CPU time of SCA.
While the CPU time of GPI is also low in this small-scale network,
its ergodic MMF rate performance is much worse than EG-FP and SCA.
This coincides with the result of perfect CSIT.
Under the same CSIT error variance ratio,
it is evident that the performance of RSMA outperforms SDMA.
This demonstrates the superior interference management capability of RSMA compared to SDMA in the presence of imperfect CSIT.

\section{Conclusion}
\label{conclusion}
This paper introduces two novel and efficient resource optimization algorithms for addressing the MMF problem of RSMA, namely, the EG-FP algorithm and its low-dimensional variant. 
The proposed EG-FP algorithm first converts the original problem into a sequence of convex subproblems based on FP, which are solvable within an AO framework.
By exploring the Lagrangian dual problem of each convex subproblem, we identify the variational inequality property for the dual problem.
The primary novelty of the proposed algorithm lies in effectively utilizing the variational inequality property to address each subproblem through the introduced extragradient algorithm.
Additionally, we propose a low-dimensional beamforming design, particularly beneficial in scenarios with a higher number of transmit antennas than users.
To expand the research scope, we extend the proposed algorithms to solve MMF rate problems of RSMA with imperfect CSIT.
Through extensive numerical experiments,
we demonstrate that our proposed algorithms achieve MMF rates nearly equivalent to the conventional SCA algorithm and significantly outperform the GPI baseline scheme.
Notably, our proposed algorithm exhibits a remarkable reduction in CPU time, amounting to less than 10\% of the SCA algorithm's runtime.
This efficiency arises from the closed-form updates in each inner iteration.
In contrast, SCA relies on the computationally intensive CVX toolbox to solve optimization problems,
and GPI employs exhaustive search methods, which are inherently time-consuming and less efficient.
This advantage makes our algorithm as a promising approach for the practical and efficient design of resource optimization algorithms for RSMA, showing its potential applications in 6G networks.

\bibliographystyle{IEEEtran}{}
\bibliography{reference}

\appendices
\section{Proof of Theorem \ref{outer convergence}}\label{outer convergence proof}
\begin{proof}
    The outer loop of Algorithm \ref{extragradient-FP}  alternatively optimizes three variable blocks  $\{\mathbf{P}, \mathbf{c}\}$, $\{\boldsymbol{\vartheta}_c, \boldsymbol{\vartheta}_p\}$, $\{\boldsymbol{\varphi}_c, \boldsymbol{\varphi}_p\}$ for problem (\ref{P2}).

    Denote $F(\mathbf{P}, \mathbf{c}, \boldsymbol{\vartheta}_c, \boldsymbol{\vartheta}_p, \boldsymbol{\varphi}_c, \boldsymbol{\varphi}_p)$ as the objective function of problem (\ref{P2}).
    Then we have
    \begin{subequations}
    \begin{align}
        &F(\mathbf{P}^{(n)}, \mathbf{c}^{(n)}, \boldsymbol{\vartheta}_c^{(n)}, \boldsymbol{\vartheta}_p^{(n)}, \boldsymbol{\varphi}_c^{(n)}, \boldsymbol{\varphi}_p^{(n)}) \\
        \overset{(d)}{\leq} &F(\mathbf{P}^{(n+1)}, \mathbf{c}^{(n+1)}, \boldsymbol{\vartheta}_c^{(n)}, \boldsymbol{\vartheta}_p^{(n)}, \boldsymbol{\varphi}_c^{(n)}, \boldsymbol{\varphi}_p^{(n)}) \\
        \overset{(e)}{\leq} &F(\mathbf{P}^{(n+1)}, \mathbf{c}^{(n+1)}, \boldsymbol{\vartheta}_c^{(n+1)}, \boldsymbol{\vartheta}_p^{(n+1)}, \boldsymbol{\varphi}_c^{(n+1)}, \boldsymbol{\varphi}_p^{(n+1)}).
    \end{align}
    \end{subequations}
    Inequality (d) holds because $(\mathbf{P}^{(n+1)}, \mathbf{c}^{(n+1)})$ is the optimal solution with the auxiliary variables $(\boldsymbol{\vartheta}_c^{(n)}, \boldsymbol{\vartheta}_p^{(n)}, \boldsymbol{\varphi}_c^{(n)}, \boldsymbol{\varphi}_p^{(n)})$ obtained from the previous iteration $n$,
    while inequality (e) holds because
    \begin{subequations}
    \begin{align}
        &g_{c,k}(\mathbf{P}^{(n+1)},\vartheta_{c,k},\varphi_{c,k})
        \leq g_{c,k}(\mathbf{P}^{(n+1)},\vartheta_{c,k}^{(n+1)},\varphi_{c,k}^{(n+1)}), \\
        &g_{p,k}(\mathbf{P}^{(n+1)},\vartheta_{p,k},\varphi_{p,k})
        \leq g_{p,k}(\mathbf{P}^{(n+1)},\vartheta_{p,k}^{(n+1)},\varphi_{p,k}^{(n+1)}).
    \end{align}
    \end{subequations}

    The above analysis shows that after one AO step  in the outer loop,
    the objective function is ensured to be non-decreasing.
    Considering that  problem (\ref{P2}) should adhere to the power constraint $\mathrm{tr}\left(\mathbf{P}^H\mathbf{P}\right) \leq P_t$,
    the objective function is bounded above.
    This implies that the outer loop of Algorithm \ref{extragradient-FP} is guaranteed to converge.
\end{proof}

\ifCLASSOPTIONcaptionsoff
\newpage
\fi

\end{document}